\documentclass[preprint,12pt,authoryear]{elsarticle}
\usepackage{xcolor}
\usepackage{physics}
\usepackage{microtype}
\usepackage{subcaption}
\usepackage{booktabs}
\usepackage{amssymb}
\usepackage{makecell}
\usepackage{mathtools}
\usepackage{amsthm}
\usepackage{algorithm}
\usepackage{algorithmic}
\usepackage{float}
\usepackage{hyperref}
\usepackage[capitalize,noabbrev]{cleveref}
\graphicspath{{figures/}}

\theoremstyle{plain}
\newtheorem{theorem}{Theorem}[section]
\newtheorem{proposition}[theorem]{Proposition}

\newtheorem{corollary}[theorem]{Corollary}
\theoremstyle{definition}
\newtheorem{definition}[theorem]{Definition}
\newtheorem{assumption}[theorem]{Assumption}
\theoremstyle{remark}

\journal{Expert Systems with Applications}

\begin{document}
\begin{frontmatter}

\title{Quantum Fidelity Landscape-Guided Prior Calibration for Single-Circuit QGAN Image Generation}

\author{
Xue Yang\textsuperscript{a,b,c,d},
Rigui Zhou\textsuperscript{a,b,*},
Dax Enshan Koh\textsuperscript{c,e,*},
Siong Thye Goh\textsuperscript{d,f,*},
Yitao Tang\textsuperscript{g},
ShiZheng Jia\textsuperscript{a,b},
Young-Wook Cho\textsuperscript{c},
% Xuezhi Ma\textsuperscript{c},
Hongyu Chen\textsuperscript{h}
}

\address{
\textsuperscript{a}School of Information Engineering, Shanghai Maritime University, Shanghai 201306, China\\
\textsuperscript{b}Research Center of Intelligent Information Processing and Quantum Intelligent Computing, Shanghai 201306, China\\
\textsuperscript{c}Quantum Innovation Centre (Q.InC), Agency for Science, Technology and Research (A*STAR), 2 Fusionopolis Way, Innovis \#08-03, Singapore 138634, Republic of Singapore\\
\textsuperscript{d}Institute of Advanced Intelligence and Computing (IAIC), Agency for Science, Technology and Research (A*STAR), 1 Fusionopolis Way, \#16-16 Connexis, Singapore 138632, Republic of Singapore\\
\textsuperscript{e}Engineering Cluster, Singapore Institute of Technology, 1 Punggol Coast Road, Singapore 828608, Republic of Singapore\\
\textsuperscript{f}Lee Kong Chian School of Business, Singapore Management University (SMU), Singapore 178899, Republic of Singapore\\
\textsuperscript{g}Fu Foundation School of Engineering and Applied Science, Columbia University, New York, NY 10027, USA\\
\textsuperscript{h}School of Computer Science and Technology, Tongji University, Shanghai 201804, China
}

% \cortext[cor1]{Corresponding author}
% % \cortext[cor2]{Second corresponding authors}
% \emailauthor{rgzhou@shmtu.edu.cn}{Rigui Zhou}
% \emailauthor{dax.koh@singaporetech.edu.sg}{Dax Enshan Koh}
% \emailauthor{goh_siong_thye@a-star.edu.sg}{Siong Thye Goh}

% \fntext[fn1]{These authors contributed equally to this work}

\begin{abstract}
Quantum Generative Adversarial Networks (QGANs) have emerged as representative generative models in the Noisy Intermediate-Scale Quantum (NISQ) era and have attracted increasing attention in quantum machine learning. However, most existing QGAN methods rely on patch-based decomposition strategies, which weaken the global consistency of generated images and increase quantum resource overhead. In this work, we investigate a simpler approach: pixel-level, end-to-end image generation using a single-quantum-circuit QGAN. By analyzing the structural matching relationship between the quantum prior and the target data distribution in Hilbert space, we provide a new theoretical perspective for understanding the training behavior of naive end-to-end QGANs. Specifically, we introduce the \textit{Quantum Fidelity Landscape (QFL)}, defined as the pairwise-fidelity structure induced by an ensemble of quantum states and preserved under shared unitary transformations of the quantum generation process. We show that, under a fixed Lipschitz readout, this invariant imposes a one-sided bound on decoded sample separation, motivating calibration of the prior-induced QFL before adversarial training. To validate this theoretical insight, we propose BasicQGAN, a QGAN framework incorporating quantum prior calibration. Before adversarial optimization, BasicQGAN aligns the prior-induced QFL with the data-induced QFL. Experimental results on small-scale grayscale image datasets show that BasicQGAN achieves stable and effective end-to-end pixel-level image generation while requiring fewer qubits and trainable parameters than representative patch-based quantum generators. Furthermore, experiments with different initial quantum-state ensembles show that QFL-calibrated ensembles achieve better generative performance.
\end{abstract}

% QGAN  QML  GAN 
% quantum prior  image generation quantum fidelity 

\begin{keyword}
QGAN \sep QML \sep GAN \sep
quantum prior \sep image generation \sep quantum fidelity 
\end{keyword}

\end{frontmatter}

% \caption{
% \textbf{Quantum fidelity landscape (QFL) as an intrinsic constraint for end-to-end QGANs.}
% (a) An initial-state ensemble induces a pairwise-fidelity structure (QFL) that is invariant under the shared unitary generator updates.
% (b) Since measurement and subsequent classical post-processing cannot increase sample distinguishability, the data-induced target QFL imposes a one-sided (upper-bound) constraint on the quantum prior: overly high-fidelity (overly concentrated) priors limit output diversity and make matching the target distribution difficult, motivating offline QFL calibration.
% }
% \label{fig:qfl_motivation}

\section{Introduction}

% Since the real images are embedded by amplitude encoding, their pixel intensities correspond to normalized real-valued basis amplitudes. To keep the generated samples in the same real-valued image representation, we use a fixed real-part amplitude decoder. Given a generated state
% \[
% |\psi\rangle=\sum_k c_k |k\rangle,\quad c_k\in\mathbb{C},
% \]
% we compute
% \[
% \hat{x}_k =
% 2\left(
% \frac{|\operatorname{Re}(c_k)|}{\max_j |\operatorname{Re}(c_j)|+\epsilon}
% \right)-1,
% \]
% and reshape $\{\hat{x}_k\}$ into an image. This decoder is fixed throughout all experiments. It is not intended to be the unique or hardware-native readout; rather, it defines the image representation under which we evaluate QFL-based prior calibration. Probability readout based on $|c_k|^2$ is a hardware-native alternative, but it induces a different image geometry and is left for future work.

Quantum machine learning (QML) \citep{biamonte2017quantum} is an emerging interdisciplinary field at the intersection of quantum computing and classical machine learning. It leverages quantum mechanical properties to enhance classical machine learning algorithms.
Among various QML models \citep{peruzzo2014variational,farhi2014quantumapproximateoptimizationalgorithm,lloyd2018quantum}, QGANs \citep{lloyd2018quantum,dallaire2018quantum,niu2022entangling} have emerged as a promising paradigm for learning complex data distributions, particularly in image generation \citep{vieloszynski2024latentqganhybridqganclassical,thomas2025vaeqwganaddressingmodecollapse}.

\paragraph{Problem}
Quantum generative adversarial networks (QGANs) have been widely studied for modeling complex data distributions and have gradually been extended to high-dimensional tasks such as image generation \citep{stein2021qugan}. Most existing QGAN image models follow a patch-based paradigm \citep{huang2021experimental, tsang2023hybrid}, where a full image is decomposed into local patches, each patch is generated by a separate quantum subcircuit, and the patches are stitched back to form the final output. This strategy partially mitigates the dimensional burden of high-resolution images, but it also introduces two major limitations: \textit{(i)} independently generated patches often weaken global consistency and structural coherence; and \textit{(ii)} as resolution increases, both the number of quantum circuits and the parameter budget grow rapidly, leading to prohibitive quantum resource costs. \textit{We therefore focus on} end-to-end, pixel-level image generation with a \emph{single} parameterized quantum circuit, avoiding patch-wise generation and stitching altogether. The central question is why naive single-circuit QGANs often fail to train, and how to make this cleaner generation paradigm stable.

\begin{figure}[H]
    \centerline{\includegraphics[width=0.92\linewidth]{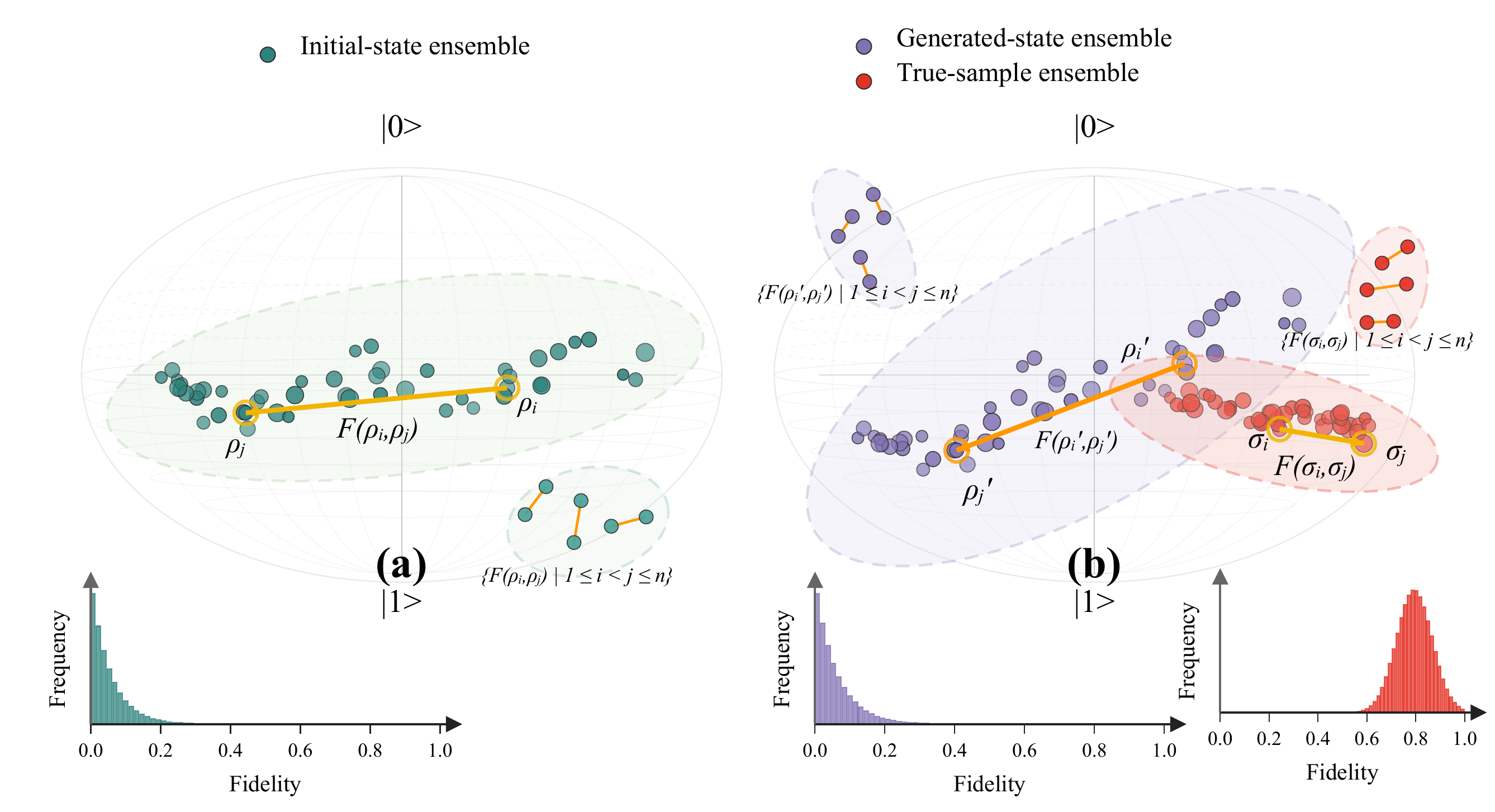}}
    \caption{The green dots in (a) denote a discrete initial-state ensemble sampled from the quantum prior, and the blue dots in (b) represent the corresponding generated-state ensemble after the generator's unitary evolution. In (b), we also show a denser ensemble example (red dots), whose QFL histogram exhibits a markedly different distribution from the other two ensembles. In each QFL histogram, the horizontal axis denotes the pairwise fidelity between distinct quantum states, while the vertical axis gives the frequency of state pairs falling into each fidelity bin; each unordered pair is counted once, with self-pairs excluded.
}
    \label{fig:qfl_motivation}
   % \vspace{-5mm}
\end{figure}

\paragraph{Diagnosis}
We identify the quantum prior as a structural factor in single-circuit QGAN training.
% Because the generator applies the same unitary to every input state, it preserves all pairwise fidelities induced by the initial-state ensemble. We call this invariant pairwise structure the \emph{Quantum Fidelity Landscape} (QFL). 
Because the generator applies the same unitary to every input state, it preserves all pairwise fidelities induced by the initial-state ensemble. Consequently, the relative positions among the states, as reflected by their pairwise fidelities, remain unchanged throughout the shared unitary evolution. We refer to this invariant pairwise structure as the \textit{Quantum Fidelity Landscape} (QFL). As illustrated in Figure ~\ref{fig:qfl_motivation}, the initial-state ensemble and its unitary-evolved generated-state ensemble therefore share the same QFL histogram, whereas an ensemble with a different state-space distribution can induce a markedly different QFL histogram.
For a fixed Lipschitz readout, we further show that QFL imposes a one-sided bound on decoded sample separation: pairs with high quantum fidelity cannot be mapped arbitrarily far apart in the output space. The amplitude-readout setting used by BasicQGAN admits a more explicit Euclidean-distance bound. These results connect the geometry of the quantum prior to the attainable geometry of generated samples and motivate calibrating the prior before adversarial training.

\paragraph{Method}
Building on this analysis, we introduce BasicQGAN, an offline--online framework for single-circuit, end-to-end image generation. In the offline stage, real images are amplitude encoded to construct the data-induced QFL, and the parameters of the initial-state preparation distribution are optimized to align the empirical prior-induced and data-induced QFL distributions. The calibration objective combines a KDE-based density discrepancy with explicit mean and variance matching. In the online stage, the calibrated initial-state ensemble is passed through a shared quantum generator, decoded by a fixed amplitude readout, and optimized adversarially with a WGAN-GP critic.

\paragraph{Evidence}
We evaluate each part of this mechanism through controlled comparisons. Using the same generator circuit, BasicQGAN produces clearer samples and lower FID than a naive global QGAN with a conventional prior. The prior ablation further shows that mismatched QFLs lead to training failure or mode collapse, whereas the calibrated ensemble produces diverse samples and achieves the lowest FID among the evaluated prior constructions. At $16\times16$ resolution, BasicQGAN is competitive with classical WGAN-GP and consistently outperforms the patch-based PQWGAN across MNIST digit classes; it also obtains lower FID than PQWGAN on the Uppercase Letters and Geometric Shapes datasets. Resource comparisons show that the single-circuit generator requires fewer qubits and trainable parameters than representative patch-based quantum generators, while the ancilla-qubit experiment demonstrates that the QFL calibration principle also applies to an auxiliary-system architecture.

Our main contributions are summarized as follows:
\begin{itemize}
  \item We formalize the Quantum Fidelity Landscape as the pairwise-fidelity structure of a quantum-state ensemble and show that it is invariant under the shared unitary evolution of a quantum generator.
  \item We connect this Hilbert-space invariant to classical generation by deriving a general one-sided output-separation bound for fixed Lipschitz readouts and an explicit Euclidean bound for amplitude readout.
  \item We develop BasicQGAN, an offline--online framework that calibrates the prior-induced QFL against the data-induced QFL before single-circuit WGAN-GP adversarial training.
  
  \item Simulations show that BasicQGAN achieves effective, generalizable generation with compact quantum resources across datasets.
  \item We further validate the effectiveness of QFL calibration through prior ablations and robustness to auxiliary-system extensions.  
  % We validate the role of QFL calibration through a controlled global baseline and prior ablations, and demonstrate competitive $16\times16$ generation, cross-dataset effectiveness, compact quantum-resource requirements, and applicability to an auxiliary-system architecture.
\end{itemize}
\section{Background and Related Work}
\label{sec:background}

\subsection{Quantum-Information Preliminaries}
We use the squared Uhlmann fidelity to quantify the similarity between two quantum
states. For density operators $\rho$ and $\sigma$, it is defined as
\begin{equation}
F(\rho,\sigma)
=
\left(
\operatorname{Tr}\sqrt{\sqrt{\rho}\,\sigma\,\sqrt{\rho}}
\right)^2.
\label{eq:prelim_fidelity}
\end{equation}
For pure states $\rho=\lvert\psi\rangle\langle\psi\rvert$ and
$\sigma=\lvert\phi\rangle\langle\phi\rvert$, this reduces to
\begin{equation}
F(\rho,\sigma)=\left|\langle\psi\mid\phi\rangle\right|^2.
\label{eq:prelim_pure_fidelity}
\end{equation}

Two standard properties of fidelity underpin our analysis
\citep{nielsen2010quantum}. First, fidelity is invariant under a common unitary
transformation:
\begin{equation}
F(U\rho U^\dagger,U\sigma U^\dagger)=F(\rho,\sigma).
\label{eq:prelim_unitary_invariance}
\end{equation}
Second, quantum measurement cannot increase state distinguishability. Let
$\mathcal{M}$ be a fixed measurement, and let $p_\rho$ and $p_\sigma$ denote
its outcome distributions for $\rho$ and $\sigma$, respectively. Then
\begin{equation}
\operatorname{TV}(p_\rho,p_\sigma)
\leq D_{\mathrm{tr}}(\rho,\sigma)
\leq \sqrt{1-F(\rho,\sigma)},
\label{eq:prelim_measurement_bound}
\end{equation}
where $D_{\mathrm{tr}}(\rho,\sigma)=\frac12\lVert\rho-\sigma\rVert_1$ is
the trace distance and
$\operatorname{TV}(p,q)=\frac12\sum_k|p_k-q_k|$ is the classical total
variation distance. The first inequality follows from data processing under
quantum measurement, and the second is the upper Fuchs--van de Graaf inequality
under the squared-fidelity convention
\citep{fuchs1999cryptographic,nielsen2010quantum}.

These properties play distinct roles in our method. Unitary invariance determines
which pairwise relations cannot be changed by the shared quantum generator, while
Eq.~\eqref{eq:prelim_measurement_bound} connects those relations to the
distinguishability of measurement outcomes.

\subsection{Related Work}
\paragraph{Classical adversarial training}
Generative adversarial networks learn a data distribution through competition between
a generator and a discriminator \citep{goodfellow2014generative}. Wasserstein GAN
replaces the original divergence-based objective with the Wasserstein-1 distance
\citep{arjovsky2017wassersteingan}, and WGAN-GP enforces the critic's Lipschitz
constraint using a gradient penalty \citep{gulrajani2017improvedtrainingwassersteingans}.
BasicQGAN adopts WGAN-GP as the objective for its online adversarial training stage.

\paragraph{Quantum adversarial learning and fidelity measures}
Foundational formulations of QGANs extended adversarial learning to quantum data and
parameterized quantum circuits \citep{lloyd2018quantum,dallaire2018quantum}.
\citet{lloyd2018quantum} further showed that quantum adversarial learning may offer an
exponential advantage when the data consist of measurement samples from certain
high-dimensional quantum systems. Fidelity has also appeared directly in QGAN
objectives: QuGAN uses swap-test estimates of quantum-state fidelity to construct
quantum generator and discriminator losses \citep{stein2021qugan}. In contrast, QFL
does not serve as the online adversarial loss between individual real and generated
states. It describes the pairwise-fidelity geometry of the entire initial-state
ensemble and the structure preserved by a shared unitary generator.

\paragraph{Quantum image-generation architectures}
Early experimental QGAN image generation includes the patch-based construction of
\citet{huang2021experimental}, which assigns image regions to multiple quantum
subcircuits and stitches their outputs. \citet{tsang2023hybrid} combined this strategy
with Wasserstein training to generate $28\times28$ grayscale images. Patch-based
generators reduce the qubit requirement of each subcircuit, but they do not explicitly
model dependencies across independently generated patches, and their total circuit and
parameter requirements grow with the number of patches. These methods also use fixed
uniform latent priors, including $U(0,\pi/2)$ for rotation angles in
\citet{huang2021experimental} and $U(0,1)$ in \citet{tsang2023hybrid}.
Other approaches generate lower-dimensional representations for subsequent classical
reconstruction. MosaiQ uses PCA-derived feature spaces \citep{silver2023mosaiq}, while
LaSt-QGAN operates in the latent space of a classical autoencoder
\citep{chang2024latentstylebasedquantumgan}. These designs improve scalability but
differ from direct pixel generation because a classical transformation or decoder maps
the quantum output back to the image space.

\paragraph{Latent sampling and processing in QGANs}
% Several recent methods modify the latent input or the quantum circuit to improve mode
% coverage and full-image generation. VAE-QWGAN learns a data-informed latent prior using
% a classical variational autoencoder and a Gaussian mixture model
% \citep{thomas2025vaeqwganaddressingmodecollapse}. ReQGAN uses a neural noise encoder
% and a trainable intensity-calibration module for single-circuit image synthesis
% \citep{yang2026end}, whereas \citet{jager2026scaling} emphasize circuit-induced
% inductive biases and enhanced noise-input techniques for end-to-end high-resolution
% generation. BasicQGAN instead treats the pairwise fidelity geometry of the quantum
% initial-state ensemble as a shared-unitary invariant and calibrates its empirical QFL
% against that of amplitude-encoded data before adversarial training. Its generative
% pathway consists of a single quantum circuit followed by a fixed amplitude readout,
% without a trainable classical decoder.
% ---- BEGIN INLINED METHOD SECTION ----
Existing image-oriented QGANs typically sample latent variables from simple prescribed
distributions. For example, PQWGAN considers latent variables drawn from either a uniform
distribution $U[0,1)$ or a standard Gaussian distribution $\mathcal{N}(0,1)$, and adopts the
uniform prior in its main experiments (Tsang et al., 2023). Similarly, LaSt-QGAN samples each
component of its latent noise independently from a standard normal distribution,
$\mathbf{z}\sim\mathcal{N}(0,I)$ (Chang et al., 2024). More recently, VAE-QWGAN replaces such
prescribed priors with a data-informed latent distribution: latent representations are first
obtained using a classical VAE encoder, and a Gaussian mixture model is subsequently fitted to
these representations for latent sampling at inference time (Thomas et al., 2025). These approaches primarily modify the distribution from which latent variables are sampled. 
% In contrast, our work focuses on the pairwise-fidelity geometry induced after latent variables are encoded into quantum states.  
Related studies have also explored alternative mechanisms for processing latent noise in quantum image generation, including the neural noise encoding used in ReQGAN \citep{yang2026endtoendqganbasedimagesynthesis} and the enhanced noise-input strategies considered by J\"ager et al. \citep{J_ger_2026}. These approaches primarily modify the distribution from which latent variables are sampled or the way latent noise is processed. In contrast, our work focuses on the pairwise-fidelity geometry induced after latent variables are encoded into quantum states.

% \textit{Quantum priors and end-to-end generation.}
% Existing image-oriented QGANs typically sample latent variables from simple prescribed distributions. For example, PQWGAN considers latent variables drawn from either a uniform distribution $U[0,1]$ or a standard Gaussian distribution $\mathcal{N}(0,1)$, and adopts the uniform prior in its main experiments \citep{tsang2023hybrid}. Similarly, LaSt-QGAN samples each component of its latent noise independently from a standard normal distribution, $z\sim\mathcal{N}(0,I)$ \citep{chang2024last}. More recently, VAE-QWGAN replaces such prescribed priors with a data-informed latent distribution: latent representations are first obtained using a classical VAE encoder, and a Gaussian mixture model is subsequently fitted to these representations for latent sampling at inference time \citep{thomas2025vae}. Related studies have also explored alternative mechanisms for processing latent noise in quantum image generation, including the neural noise encoding used in ReQGAN \citep{yang2026end} and the enhanced noise-input strategies considered by J\"ager et al. \citep{jager2026}. These approaches primarily modify the distribution from which latent variables are sampled or the way latent noise is processed. In contrast, our work focuses on the pairwise-fidelity geometry induced after latent variables are encoded into quantum states.

\section{Method: BasicQGAN}
\label{sec:method}

\subsection{Problem Setting and Overview}
Let $z$ denote a latent variable used to prepare an initial quantum state $\rho_z$.
The quantum generator applies the same parameterized unitary $U_\theta$ to every input,
and a fixed readout map $\mathcal{D}$ converts the resulting state into a classical image:
\begin{equation}
    G_\theta(z)=\mathcal{D}\!\left(U_\theta\rho_zU_\theta^\dagger\right).
    \label{eq:generator_pipeline}
\end{equation}
Unlike patch-based generators, BasicQGAN uses one shared circuit to produce the complete
pixel vector. This design is resource efficient, but it raises a structural question:
which relations among input samples can be changed by training the shared unitary?

Our analysis and method follow three steps. First, we identify a pairwise geometry of the
prior ensemble that is invariant under a shared unitary. Second, we relate this invariant
to the separation of classically decoded outputs, first for a general stable readout and
then more explicitly for amplitude readout. Third, we use the resulting principle to
calibrate the quantum prior before standard adversarial training. Figure~\ref{model}
summarizes this offline--online pipeline.

\begin{figure*}[t]
    \centering
    \includegraphics[width=1.02\linewidth]{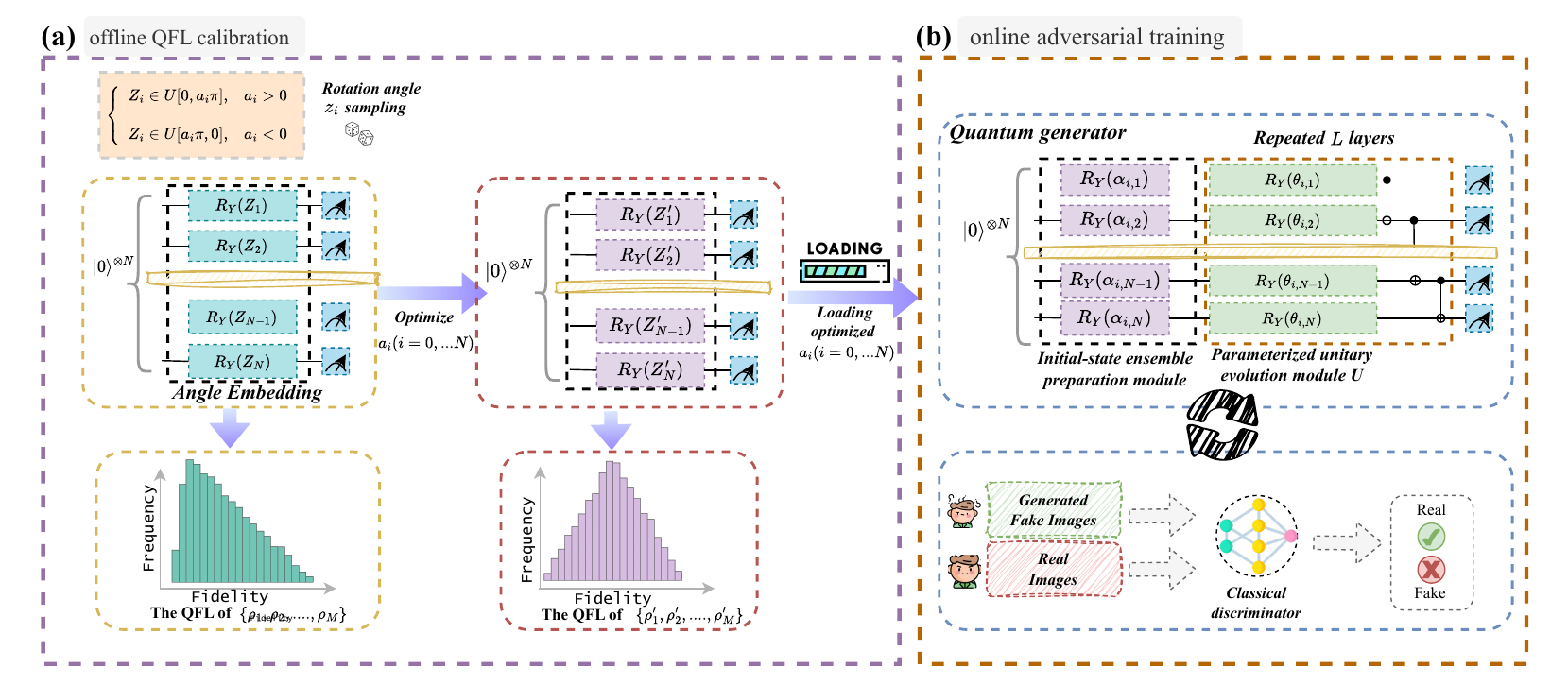}
    \caption{Overall framework of BasicQGAN. (a) During offline QFL calibration,
    the state-preparation parameters $z_1,z_2,\ldots,z_n$ are optimized to
    transform the initial ensemble $\rho_1,\rho_2,\ldots,\rho_n$ into a
    calibrated ensemble $\rho_1',\rho_2',\ldots,\rho_n'$. (b) During online
    adversarial training, the calibrated preparation parameters
    $z_1',z_2',\ldots,z_n'$ initialize the state ensemble used by the
    single-circuit quantum generator.}
    \label{model}
\end{figure*}

\subsection{QFL as a Prior-Determined Invariant}

\begin{definition}[Quantum fidelity landscape]
\label{def:qfl}
For an ensemble $\mathcal{E}=\{\rho_i\}_{i=1}^{M}$, its quantum fidelity landscape
(QFL) is the pairwise-fidelity matrix
\begin{equation}
    K_{ij}(\mathcal{E})=F(\rho_i,\rho_j),
    \qquad 1\leq i,j\leq M,
    \label{eq:qfl_matrix}
\end{equation}
where $F$ is the squared Uhlmann fidelity. For pure states,
$F(\rho_i,\rho_j)=|\langle\psi_i|\psi_j\rangle|^2$.
We use the empirical distribution of the off-diagonal entries of $K$ when comparing
ensembles whose samples have no one-to-one correspondence.
\end{definition}

\begin{proposition}[QFL invariance under a shared unitary]
\label{prop:qfl_invariance}
Let $\rho_i^\theta=U_\theta\rho_iU_\theta^\dagger$ for all samples in
$\mathcal{E}$. Then, for every $i,j$ and every $\theta$,
\begin{equation}
    F(\rho_i^\theta,\rho_j^\theta)=F(\rho_i,\rho_j).
    \label{eq:qfl_invariance}
\end{equation}
Consequently, optimizing $\theta$ cannot reshape the QFL induced by the prior ensemble.
\end{proposition}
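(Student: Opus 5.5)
The claim follows directly from the unitary invariance of fidelity, Eq.~\eqref{eq:prelim_unitary_invariance}. Applying it with $U=U_\theta$, $\rho=\rho_i$, $\sigma=\rho_j$ gives Eq.~\eqref{eq:qfl_invariance} for each pair $(i,j)$ and each $\theta$. For a self-contained argument I would verify the invariance from the definition~\eqref{eq:prelim_fidelity}, in three steps.

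\textbf{Step 1: square roots commute with unitary conjugation.} Since $\rho\succeq 0$, the operator $U\sqrt{\rho}U^\dagger$ is positive semidefinite. Its square is $U\rho U^\dagger$. By uniqueness of the positive square root,
\[
\sqrt{U\rho U^\dagger}=U\sqrt{\rho}\,U^\dagger .
\]

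\textbf{Step 2: conjugation passes through the inner expression.} Substituting Step~1 and using $U^\dagger U=I$,
\[
\sqrt{U\rho U^\dagger}\,U\sigma U^\dagger\sqrt{U\rho U^\dagger}
= U\sqrt{\rho}\,\sigma\sqrt{\rho}\,U^\dagger .
\]
Applying Step~1 again to the positive operator $\sqrt{\rho}\,\sigma\sqrt{\rho}$, the outer square root equals $U\sqrt{\sqrt{\rho}\,\sigma\sqrt{\rho}}\,U^\dagger$.

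\textbf{Step 3: the trace removes the conjugation.} By cyclicity of the trace, the unitaries drop out, so the trace is unchanged. Squaring then yields $F(U\rho U^\dagger,U\sigma U^\dagger)=F(\rho,\sigma)$.

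In the pure-state case of Definition~\ref{def:qfl} there is a shorter route. From $\langle U\psi_i|U\psi_j\rangle=\langle\psi_i|U^\dagger U|\psi_j\rangle=\langle\psi_i|\psi_j\rangle$, the squared moduli agree.

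\textbf{The "consequently" clause.} The identity holds entrywise, so $K(\mathcal{E}^\theta)=K(\mathcal{E})$ as matrices for every $\theta$. In particular, the empirical distribution of the off-diagonal entries is independent of $\theta$. Any objective that depends on $\theta$ only through the generated ensemble therefore cannot change the QFL.

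\textbf{Main obstacle.} The only step needing care is the uniqueness of the positive square root used in Step~1. It follows from the spectral theorem: the positive square root is obtained by taking square roots of the eigenvalues in a fixed eigenbasis, and conjugating by $U$ merely rotates that eigenbasis. The rest is bookkeeping.
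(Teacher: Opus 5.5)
Your proposal is correct and follows essentially the same route as the paper, which applies the standard unitary invariance of fidelity with $U=U_\theta$ to each pair, also gives the direct pure-state computation $\langle\psi_i|U_\theta^\dagger U_\theta|\psi_j\rangle=\langle\psi_i|\psi_j\rangle$, and concludes that $K$ and its off-diagonal distribution are invariant entrywise. Your extra derivation of unitary invariance from the Uhlmann definition via $\sqrt{U\rho U^\dagger}=U\sqrt{\rho}\,U^\dagger$ is also correct; the paper simply cites that property as standard.
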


The proposition does not imply that individual states remain unchanged: $U_\theta$ can
substantially move each state in Hilbert space. It states that their pairwise fidelity
relations move with them and remain fixed. The proof is given in \ref{app:qfl_invariance}.

\subsection{From QFL to Classical Output Separation}
QFL is defined in Hilbert space, whereas the discriminator receives classical images.
To connect the two spaces, write a fixed readout as $\mathcal{D}=h\circ\mathcal{M}$,
where the measurement $\mathcal{M}$ maps a state $\rho$ to a probability vector $p_\rho$
and $h$ is a fixed classical post-processing map.

\begin{assumption}[Stable fixed readout]
\label{ass:stable_readout}
Let $(\mathcal X,d_{\mathcal X})$ be the metric space of classical outputs, and let
$p=(p_k)_{k=1}^{K}$ and $q=(q_k)_{k=1}^{K}$ be probability vectors over the same
$K$ measurement outcomes. The post-processing map $h$ satisfies
\begin{equation}
 d_{\mathcal X}(h(p),h(q))\leq L\,\operatorname{TV}(p,q)
 \label{eq:stable_readout}
\end{equation}
for some finite $L$, where
$\operatorname{TV}(p,q):=\frac12\sum_{k=1}^{K}|p_k-q_k|$ is total variation distance
\citep{gibbs2002choosing}; equivalently, it is the classical trace distance of
\citet[Section~9.1]{nielsen2010quantum}.
\end{assumption}

This condition only excludes a readout that amplifies an arbitrarily small change in its
measurement probabilities into an unbounded output change. It does not prescribe a
particular measurement basis or image representation.

\paragraph{Scope with respect to classical decoders}
A fixed neural post-processing map is covered by Assumption~\ref{ass:stable_readout}
whenever it has a finite Lipschitz constant. A trainable family
$\{h_\phi\}_{\phi\in\Phi}$ is covered uniformly only if
$\sup_{\phi\in\Phi}\operatorname{Lip}(h_\phi)<\infty$. Without such a constraint, a
classical decoder may learn to amplify small readout differences, and no
parameter-independent output-separation bound follows from QFL alone. BasicQGAN avoids
this ambiguity by using a fixed, parameter-free decoder, allowing its metric distortion
to be characterized explicitly below.

\begin{theorem}[QFL-induced output-separation bound]
\label{thm:readout_bound}
Under Assumption~\ref{ass:stable_readout}, any pair of outputs obtained from a shared
unitary satisfies
\begin{equation}
 d_{\mathcal X}\!\left(\mathcal{D}(\rho_i^\theta),
                         \mathcal{D}(\rho_j^\theta)\right)
 \leq L\sqrt{1-F(\rho_i,\rho_j)}.
 \label{eq:general_output_bound}
\end{equation}
\end{theorem}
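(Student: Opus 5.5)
The plan is a three-link chain: the Lipschitz bound on the post-processing map, then data processing under measurement, then unitary invariance of fidelity. Write $\mathcal{D}=h\circ\mathcal{M}$. Let $p_i:=p_{\rho_i^\theta}$ and $p_j:=p_{\rho_j^\theta}$ be the outcome distributions of the fixed measurement $\mathcal{M}$ on the evolved states $\rho_i^\theta=U_\theta\rho_iU_\theta^\dagger$ and $\rho_j^\theta$.

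\textbf{Step 1 (classical stage).} Assumption~\ref{ass:stable_readout} applies to the probability vectors $p_i,p_j$ over the same $K$ outcomes. It gives
\begin{equation*}
 d_{\mathcal X}\!\left(\mathcal{D}(\rho_i^\theta),\mathcal{D}(\rho_j^\theta)\right)
 = d_{\mathcal X}\!\left(h(p_i),h(p_j)\right)
 \leq L\,\operatorname{TV}(p_i,p_j).
\end{equation*}

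\textbf{Step 2 (quantum stage).} I would apply Eq.~\eqref{eq:prelim_measurement_bound} to the pair $(\rho_i^\theta,\rho_j^\theta)$. The first inequality is data processing of trace distance under the measurement channel. The second is the upper Fuchs--van de Graaf bound. Together they give
\begin{equation*}
 \operatorname{TV}(p_i,p_j)\leq D_{\mathrm{tr}}(\rho_i^\theta,\rho_j^\theta)\leq\sqrt{1-F(\rho_i^\theta,\rho_j^\theta)}.
\end{equation*}

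\textbf{Step 3 (invariance).} Proposition~\ref{prop:qfl_invariance} gives $F(\rho_i^\theta,\rho_j^\theta)=F(\rho_i,\rho_j)$. Chaining the three inequalities then yields Eq.~\eqref{eq:general_output_bound}, uniformly in $\theta$. The resulting bound depends only on the prior-induced QFL entry $K_{ij}$ and the constant $L$.

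There is no substantive obstacle, since every ingredient has been stated earlier. The only point needing care is a consistency check. The measurement $\mathcal{M}$ must be fixed, meaning independent of $i$, $j$ and $\theta$, so that both states are pushed through the same channel and the data-processing inequality applies. Likewise, $h$ must be a single map, so that one Lipschitz constant $L$ covers every pair. I would make this explicit, and note that both conditions hold because $\mathcal{D}$ is fixed by hypothesis. I would also observe that squared-fidelity conventions matter for the constant: with $F$ as in Eq.~\eqref{eq:prelim_fidelity}, the Fuchs--van de Graaf bound reads $\sqrt{1-F}$ exactly, so no additional factor appears.
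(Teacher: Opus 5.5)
Your proposal is correct and follows the paper's proof essentially step for step. Both chain the same four ingredients: the Lipschitz readout bound, measurement contractivity $\operatorname{TV}\leq D_{\mathrm{tr}}$, the upper Fuchs--van de Graaf inequality, and unitary invariance via Proposition~\ref{prop:qfl_invariance}. The only difference is that the paper spells out the measurement step, treating $\mathcal{M}$ as a CPTP quantum-to-classical channel with diagonal outputs whose trace distance equals $\operatorname{TV}$, whereas you invoke it directly through Eq.~\eqref{eq:prelim_measurement_bound}.
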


Thus, a high-fidelity prior pair cannot be mapped to arbitrarily distant outputs by the
shared unitary and a stable fixed readout. This is a one-sided constraint: low prior
fidelity permits a larger separation but does not guarantee that training will realize
it. In particular, QFL neither determines the generated distribution nor replaces the
need for an expressive circuit and successful adversarial optimization.

For Euclidean outputs $x_i=G_\theta(z_i)$ with
$\bar{x}=M^{-1}\sum_i x_i$, Theorem~\ref{thm:readout_bound} further gives
\begin{equation}
 \frac{1}{M}\sum_{i=1}^{M}\|x_i-\bar{x}\|_2^2
 \leq \frac{L^2}{2M^2}\sum_{i,j=1}^{M}
       \left(1-F(\rho_i,\rho_j)\right),
 \label{eq:output_dispersion_bound}
\end{equation}
which aggregates the pairwise constraint into an upper bound on output dispersion.
Proofs of Theorem~\ref{thm:readout_bound} and
Eq.~\eqref{eq:output_dispersion_bound} are deferred to
~\ref{app:readout_bound}.

\subsection{Amplitude-Readout Specialization}
The preceding result covers a general class of stable fixed readouts. We next specialize
the analysis to the amplitude representation used by BasicQGAN and obtain an explicit
bound without an unspecified readout constant.

\paragraph{Data-induced amplitude geometry}
For a nonnegative grayscale vector $x\in\mathbb{R}_+^d$, amplitude encoding prepares
\begin{equation}
 |\phi(x)\rangle=\sum_{k=1}^{d}\frac{x_k}{\|x\|_2}|k\rangle.
 \label{eq:data_amplitude_encoding}
\end{equation}
For two encoded images $x_i$ and $x_j$,
\begin{align}
 F_{ij}^{\mathrm{data}}
 &=\left(\frac{x_i^\top x_j}{\|x_i\|_2\|x_j\|_2}\right)^2, \\
 \left\|\frac{x_i}{\|x_i\|_2}-\frac{x_j}{\|x_j\|_2}\right\|_2^2
 &=2\left(1-\sqrt{F_{ij}^{\mathrm{data}}}\right).
 \label{eq:data_fidelity_geometry}
\end{align}
Hence, the data QFL describes the angular geometry of normalized image vectors under
this encoding.

\paragraph{Magnitude geometry of generated states}
Let
\begin{equation}
 |\psi_i\rangle=\sum_{k=1}^{d}c_{ik}|k\rangle,
 \qquad
 a_i=(|c_{i1}|,\ldots,|c_{id}|)^\top.
 \label{eq:magnitude_vector}
\end{equation}
The vector $a_i$ is the square-root probability representation of a computational-basis
measurement because $a_{ik}=\sqrt{p_{ik}}=|c_{ik}|$.

\begin{corollary}[Magnitude-distance bound]
\label{cor:magnitude_bound}
For any two pure states in Eq.~\eqref{eq:magnitude_vector},
\begin{equation}
 \|a_i-a_j\|_2^2
 \leq 2\left(1-\sqrt{F(\rho_i,\rho_j)}\right).
 \label{eq:magnitude_distance_bound}
\end{equation}
\end{corollary}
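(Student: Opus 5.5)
Expanding the squared distance and using that each magnitude vector has unit norm, $\|a_i\|_2^2=\sum_k|c_{ik}|^2=1$ (and likewise for $a_j$), gives
\begin{equation*}
 \|a_i-a_j\|_2^2 = 2\Bigl(1-\sum_{k=1}^{d}|c_{ik}|\,|c_{jk}|\Bigr).
\end{equation*}
The claim is therefore equivalent to the lower bound $\sum_k |c_{ik}||c_{jk}|\geq \sqrt{F(\rho_i,\rho_j)}$ on the Bhattacharyya coefficient of the two computational-basis distributions $p_{ik}=|c_{ik}|^2$ and $p_{jk}=|c_{jk}|^2$.

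For pure states, Eq.~\eqref{eq:prelim_pure_fidelity} gives $\sqrt{F(\rho_i,\rho_j)}=|\langle\psi_i|\psi_j\rangle|=\bigl|\sum_k \overline{c_{ik}}\,c_{jk}\bigr|$. The triangle inequality for complex sums then yields
\begin{equation*}
 \Bigl|\sum_k \overline{c_{ik}}\,c_{jk}\Bigr| \leq \sum_k |c_{ik}|\,|c_{jk}| = a_i^\top a_j.
\end{equation*}
Substituting this into the expansion above gives Eq.~\eqref{eq:magnitude_distance_bound} directly. I would remark that this is the pure-state, computational-basis instance of the general fact that measurement cannot decrease fidelity: the classical (Bhattacharyya) fidelity of the outcome distributions is at least the quantum fidelity. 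This places the corollary alongside the data-processing reasoning behind Theorem~\ref{thm:readout_bound}, but here with an explicit constant and the Hellinger geometry instead of total variation.

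Finally, the corollary should be applied to the generated states $\rho_i^\theta$. By Proposition~\ref{prop:qfl_invariance}, their fidelity equals the prior fidelity, so the bound is expressed in terms of the prior QFL.

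No step is a genuine obstacle. The one point to handle carefully is phases: the bound holds because taking magnitudes can only increase the overlap. Equality holds when all products $\overline{c_{ik}}c_{jk}$ share a common phase.
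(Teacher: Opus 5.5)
Your proof is correct and follows essentially the same route as the paper's. Both use $\|a_i\|_2=\|a_j\|_2=1$ to write $\|a_i-a_j\|_2^2=2(1-a_i^\top a_j)$ and then bound $\sqrt{F(\rho_i,\rho_j)}=|\langle\psi_i|\psi_j\rangle|\le a_i^\top a_j$ via the triangle inequality for complex sums; your extra remarks (the Bhattacharyya/measurement-monotonicity reading and the equality condition) are accurate but not needed for the argument.
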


This result holds for both real and complex amplitudes because it depends on $|c_{ik}|$.
It also explains why we use square-root probabilities rather than probabilities as image
coordinates: square-root probabilities recover the amplitude magnitudes and admit the
direct fidelity--Euclidean relation in Eq.~\eqref{eq:magnitude_distance_bound}.
Probability readout remains a valid alternative, but it induces a different geometry.
The proof is provided in ~\ref{app:magnitude_bound}.

\paragraph{Fixed decoder used in BasicQGAN}
Let $\mathbf{1}\in\mathbb{R}^{d}$ be the all-ones vector and
$\|a\|_\infty:=\max_k|a_k|$. Ignoring the numerical stabilizer for clarity, the decoder is
\begin{equation}
 \widetilde{x}_i=2\frac{a_i}{\|a_i\|_\infty}-\mathbf{1}.
 \label{eq:basicqgan_decoder}
\end{equation}
It maps each sample to $[-1,1]^d$ before reshaping it into an image. The sample-wise
maximum normalization changes scale but preserves the normalized amplitude direction:
\begin{equation}
 \frac{\widetilde{x}_i+\mathbf{1}}
      {\|\widetilde{x}_i+\mathbf{1}\|_2}=a_i.
 \label{eq:decoder_direction}
\end{equation}
Therefore, Corollary~\ref{cor:magnitude_bound} applies directly to the normalized output
directions. Importantly, Eq.~\eqref{eq:decoder_direction} is an invertibility statement,
not an isometry claim: it shows that the amplitude-magnitude vector can be recovered
from the decoded sample, but it does not assert preservation of raw Euclidean distances.
For the decoder $T(a):=2a/\|a\|_\infty-\mathbf{1}$ on nonnegative unit vectors
$a,b\in\mathbb{R}_+^d$, ~\ref{app:magnitude_bound} proves the metric bounds
\begin{equation}
 \|a-b\|_2
 \leq \|T(a)-T(b)\|_2
 \leq 2(d+\sqrt d)\|a-b\|_2.
 \label{eq:decoder_metric_distortion}
\end{equation}
Consequently, the implemented normalization may distort pairwise Euclidean distances,
but only by fixed dimension-dependent factors; together with
Eq.~\eqref{eq:magnitude_distance_bound}, it yields the one-sided bound
\begin{equation}
 \|T(a_i)-T(a_j)\|_2
 \leq 2(d+\sqrt d)\sqrt{2\left(1-\sqrt{F(\rho_i,\rho_j)}\right)}.
 \label{eq:decoded_fidelity_bound}
\end{equation}
Thus, QFL constrains rather than reconstructs the Euclidean geometry seen after decoding.
The implication remains one-sided because the magnitude readout discards amplitude
signs and phases: two low-fidelity states can still yield the same magnitude vector and
hence the same decoded image. QFL calibration should therefore be interpreted as a
structural prior for the subsequent adversarial optimization, rather than as a guarantee
of Euclidean-distance matching or distribution recovery.
In our implementation, both state preparation and the generator use only
$R_Y$ and CNOT gates, whose ideal matrices are real. The statevector is consequently
real-valued, and the implemented quantity $|\operatorname{Re}(c_k)|$ equals
$|c_k|=\sqrt{p_k}$ up to numerical precision. The theoretical formulation in terms of
$|c_k|$ is more general: if complex-valued gates are used, the same decoder should be
implemented as $\sqrt{p_k}$ rather than by retaining only the real part.

\subsection{QFL-Guided Quantum Prior Calibration}
Proposition~\ref{prop:qfl_invariance} shows that the prior QFL cannot be corrected by
updating the shared unitary during adversarial training. Equations~\eqref{eq:data_fidelity_geometry}
and~\eqref{eq:magnitude_distance_bound} further show that, under the amplitude representation,
this invariant is connected to the geometry of the decoded samples. These observations
motivate calibrating the prior before adversarial training.

Because prior and data samples have no pairwise correspondence, BasicQGAN does not match
two QFL matrices entry by entry. For an ensemble $\mathcal{E}$, we instead define the
empirical distribution of off-diagonal fidelities as
\begin{equation}
\widehat{\mu}_{\mathcal{E}}
 =\frac{2}{M(M-1)}\sum_{1\leq i<j\leq M}
   \delta_{F(\rho_i,\rho_j)}.
\label{eq:empirical_qfl}
\end{equation}
Here $\delta_t$ denotes the Dirac point mass located at $t\in[0,1]$.
Real images are amplitude encoded to construct
$\widehat{\mu}_{\mathrm{data}}$. A parameterized state-preparation distribution with
parameters $\alpha=(a_1,\ldots,a_N)$ induces
$\widehat{\mu}_{\mathrm{prior}}(\alpha)$.

For qubit $i$, we reparameterize the sampled rotation as
$Z_i=\pi a_i u_i$, where $u_i\sim U[0,1]$. This is equivalent to sampling from
$U[0,a_i\pi]$ when $a_i\geq0$ and from $U[a_i\pi,0]$ otherwise. We optimize
\begin{equation}
 \alpha^*=\arg\min_\alpha
 \mathcal{L}_{\mathrm{QFL}}(\alpha),
 \label{eq:qfl_calibration_problem}
\end{equation}
where
\begin{align}
 \mathcal{L}_{\mathrm{QFL}}
 ={}&D_{\mathrm{KDE}}(\widehat{\mu}_{\mathrm{prior}},
                      \widehat{\mu}_{\mathrm{data}})
 +\left|\widehat{m}_{\mathrm{prior}}-\widehat{m}_{\mathrm{data}}\right| \nonumber\\
 &+\left|\widehat{v}_{\mathrm{prior}}-\widehat{v}_{\mathrm{data}}\right|.
 \label{eq:qfl_loss}
\end{align}
Here $\widehat{m}$ and $\widehat{v}$ are the empirical mean and variance of pairwise
fidelities. We evaluate both kernel-density estimates on a common grid
$\{q_b\}_{b=1}^{B}$ and use the discretized absolute discrepancy
\begin{equation}
 D_{\mathrm{KDE}}(\mu,\nu)
 =\frac{1}{B}\sum_{b=1}^{B}
 |\widehat{f}_{\mu}(q_b)-\widehat{f}_{\nu}(q_b)|.
 \label{eq:kde_discrepancy}
\end{equation}
For the empirical QFL distributions induced by finite state ensembles, we combine a
smoothed density discrepancy with explicit moment constraints. The KDE term compares
the overall density profiles while retaining sensitivity to differences in local peaks
and low-density regions. The mean and variance terms explicitly constrain the overall
fidelity level and the degree of concentration, respectively. By jointly matching
distributional shape, location, and dispersion, this composite objective provides a
statistically interpretable optimization signal for quantum-prior calibration. It can
be viewed as one finite-sample surrogate for QFL calibration; other statistical measures
of discrepancy between the prior and data QFLs may also be explored within this
framework.
Algorithm~\ref{alg:Optimization algorithm} summarizes the offline stage.

\begin{algorithm}
\caption{QFL-guided quantum prior calibration}
\label{alg:Optimization algorithm}
\textbf{Input}: Number of epochs $n_{\mathrm{epoch}}$, number of qubits $N$,
ensemble size $M$, learning rate $\eta$, data QFL distribution
$\widehat{\mu}_{\mathrm{data}}$, and trainable prior parameters $\alpha$.
\begin{algorithmic}[1]
\STATE Initialize $\alpha=(a_1,\ldots,a_N)$.
\FOR{$t=1,\ldots,n_{\mathrm{epoch}}$}
\STATE Sample $u_i^{(m)}\sim U[0,1]$ and set
$Z_i^{(m)}=\pi a_i u_i^{(m)}$ for $m=1,\ldots,M$.
\STATE Prepare $M$ initial states and compute all pairwise fidelities.
\STATE Construct $\widehat{\mu}_{\mathrm{prior}}(\alpha)$ using
Eq.~\eqref{eq:empirical_qfl}.
\STATE Evaluate $\mathcal{L}_{\mathrm{QFL}}$ using Eq.~\eqref{eq:qfl_loss}.
\STATE Update $\alpha\leftarrow\operatorname{Adam}
(\mathcal{L}_{\mathrm{QFL}},\alpha,\eta)$.
\ENDFOR
\STATE Return the calibrated sampling parameters $\alpha^*$.
\end{algorithmic}
\end{algorithm}

This objective matches distributional statistics rather than asserting exact equality of
the two finite QFL matrices. Calibration is performed once, and $\alpha^*$ remains fixed
during online adversarial training. Thus, QFL calibration supplies a prior geometry; it
does not replace the adversarial objective or guarantee successful generation by itself.

\paragraph{Consistency of finite-ensemble estimation}
The $\binom{M}{2}$ pairwise fidelities in Eq.~\eqref{eq:empirical_qfl} are dependent
because each state participates in multiple pairs. If
$\rho_1,\ldots,\rho_M$ are i.i.d. states sampled under a fixed prior parameter $\alpha$,
then, for any bounded measurable $\varphi:[0,1]\to\mathbb{R}$,
\begin{equation}
 \frac{2}{M(M-1)}\sum_{i<j}\varphi(F(\rho_i,\rho_j))
 \xrightarrow{\mathrm{a.s.}}
 \mathbb{E}[\varphi(F(\rho_1,\rho_2))].
 \label{eq:qfl_ustat_convergence}
\end{equation}
The left-hand side is an order-two U-statistic. By choosing $\varphi$ for the first and
second moments or for a fixed-bandwidth KDE evaluated on a fixed finite grid,
Eq.~\eqref{eq:qfl_ustat_convergence} establishes consistency of the empirical QFL
descriptors, including its mean, variance, and KDE values. Consequently, as $M$
increases, the empirical QFL histogram converges to the corresponding population
histogram when a common bin partition is used across ensemble sizes. The
composite loss in Eq.~\eqref{eq:qfl_loss} is computed from these consistent QFL
estimates. ~\ref{app:qfl_stabilization} provides the complete proof.
Figure~\ref{statistic} compares QFL estimates obtained from 500, 1000, 1500, and 2000
states for digit~0 at
$16\times16$ resolution. Based on the observed trade-off between estimation stability
and computational cost, we use 1500 states for the remaining QFL calculations.

Figure~\ref{Hypothesis} compares the data QFL with priors produced by several sampling
ranges. The broad priors $U[0,\pi]$ and $U[0,1]$ and the highly concentrated setting
$Z_0\sim U[0,0.01]$, $Z_i=0$ for $i>0$, all yield visibly different QFL distributions.
The calibrated prior more closely matches the data-QFL statistics. Corresponding results
for other datasets are reported in ~\ref{app_3}.

\begin{figure}[H]
    \centering
    \includegraphics[width=0.8\linewidth]{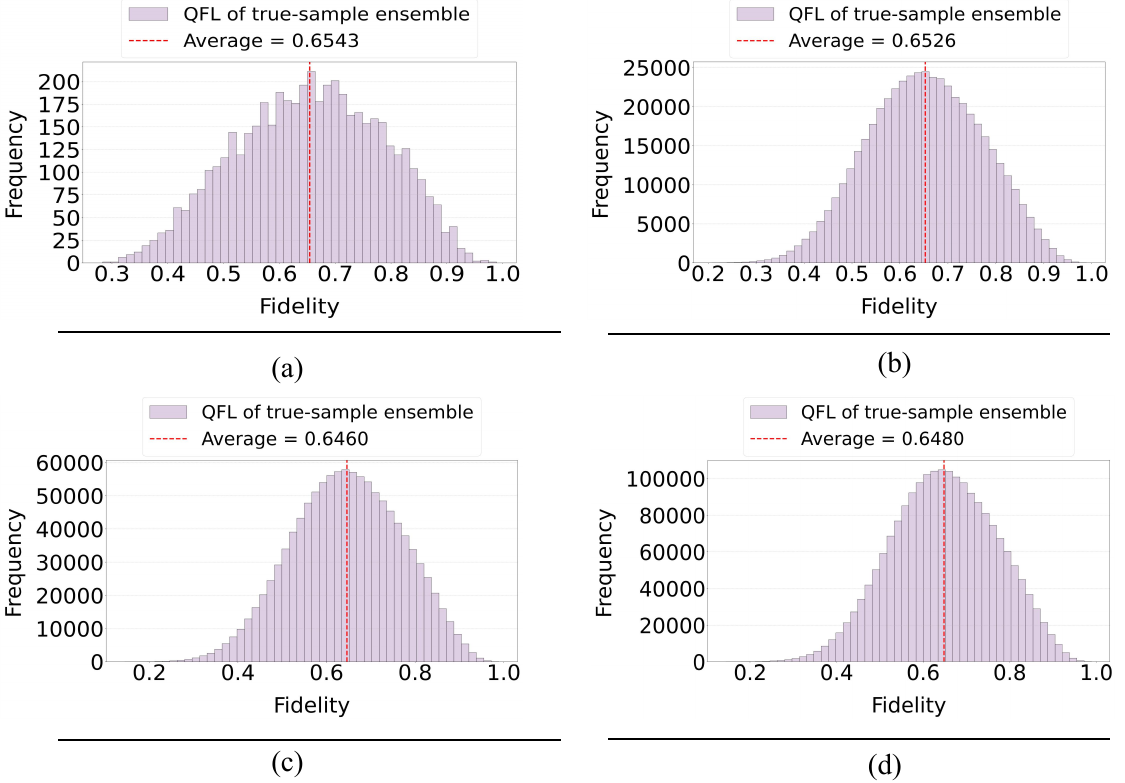}
    \caption{Empirical QFL distributions computed from 500, 1000, 1500, and 2000
    real samples of MNIST digit~0 at $16\times16$ resolution.}
    \label{statistic}
\end{figure}

\begin{figure}[H]
    \centering
    \includegraphics[width=0.8\linewidth]{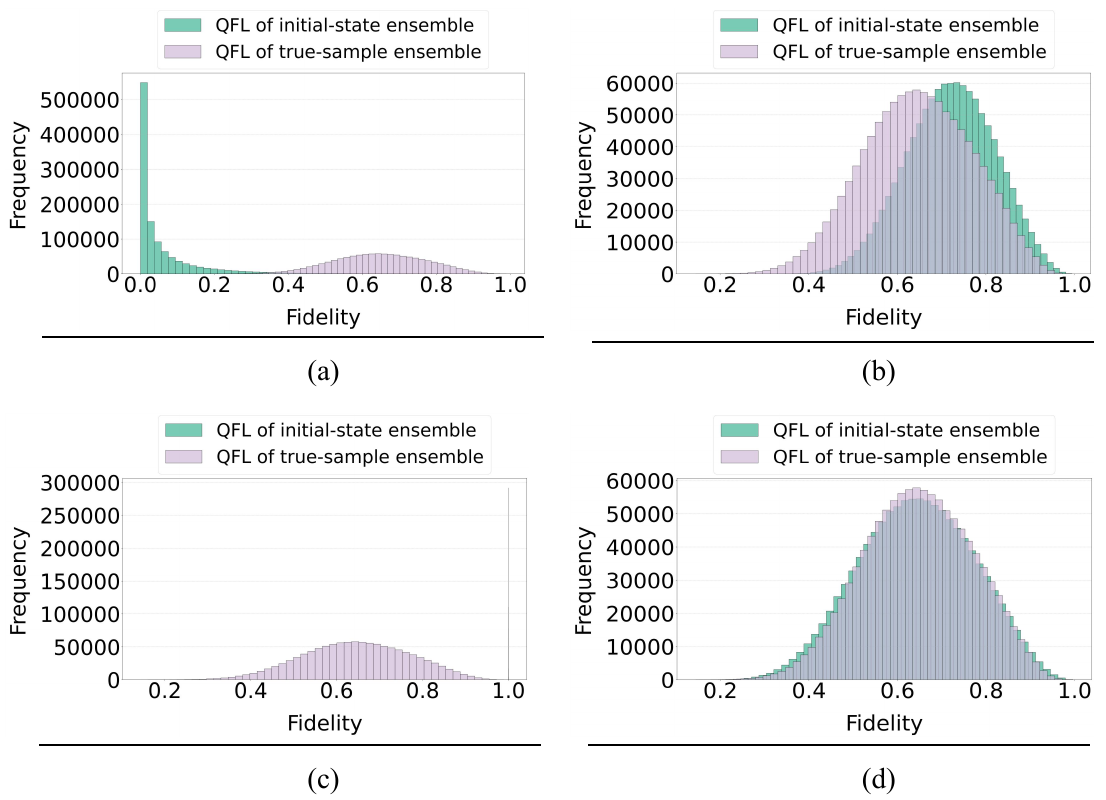}
    \caption{QFL distributions induced by different prior preparations and by real data.
    Panels (a)--(c) use uncalibrated sampling schemes, whereas panel (d) uses the
    QFL-calibrated prior.}
    \label{Hypothesis}
\end{figure}

\subsection{Single-Circuit Generation and Online Training}
The BasicQGAN generator has $N=\log_2 d$ qubits for a $d$-pixel image. The calibrated
rotation distribution first prepares
\begin{equation}
 |\psi_z^0\rangle=
 \left[\bigotimes_{i=1}^{N}R_Y(Z_i)\right]|0\rangle^{\otimes N}.
 \label{eq:prior_state}
\end{equation}
The shared generator then applies $L$ layers of trainable $R_Y$ rotations and nearest-neighbor CNOT gates:
\begin{equation}
 |\psi_z^\theta\rangle=U_\theta|\psi_z^0\rangle
 =\sum_{k=1}^{d}c_k(z,\theta)|k\rangle.
 \label{eq:generated_state}
\end{equation}
The fixed decoder in Eq.~\eqref{eq:basicqgan_decoder} converts this state into the complete
pixel vector; no patch-wise generation or trainable classical decoder is used.

We use the same classical discriminator architecture as PQWGAN and optimize the generator
and discriminator with WGAN-GP. Denoting the critic by $D_\omega$, the objective is
\begin{align}
 \min_\theta\max_\omega\quad
 &\mathbb{E}_{x\sim P_{\mathrm{data}}}[D_\omega(x)]
 -\mathbb{E}_{z\sim P_{\alpha^*}}[D_\omega(G_\theta(z))] \nonumber\\
 &-\lambda_{\mathrm{GP}}\mathbb{E}_{\widehat{x}}
 \left(\|\nabla_{\widehat{x}}D_\omega(\widehat{x})\|_2-1\right)^2.
 \label{eq:wgan_gp}
\end{align}
The offline stage optimizes $\alpha$ to calibrate the prior QFL, whereas the online stage
holds $\alpha^*$ fixed and optimizes $\theta$ and $\omega$ to learn the image distribution.
The complete alternating optimization procedure is given in \ref{app_4}.

% ---- END INLINED METHOD SECTION ----

\section{Experiments}
\paragraph{Datasets}
\label{Datasets and Evaluation Metric}
%Given current limitations in quantum resources and technology, QGANs employing pure quantum generators are most appropriate for processing small-scale grayscale image datasets. Therefore, we constructed four such datasets based on three public handwritten image repositories: MNIST \cite{deng2012mnist}, EMNIST \cite{cohen2017emnistextensionmnisthandwritten}, and HDS \cite{robert2022hds}. These datasets include two MNIST versions (original 28$\times$28 and downsampled 16$\times$16, both with 10 classes), a 16$\times$16 Geometric Shapes dataset (4 classes), and a 16$\times$16 Uppercase Letters dataset (10 classes). More dataset details are provided in Appendix \ref{app_5}.
% Constrained by current quantum hardware limitations, pure quantum generators are best suited for small-scale grayscale images. 
% Accordingly, we curated four datasets derived from three public repositories: MNIST \cite{deng2012mnist}, EMNIST \cite{cohen2017emnistextensionmnisthandwritten}, and HDS \cite{robert2022hds}. These benchmarks include two MNIST variants (original 28$\times$28 and downsampled 16$\times$16, 10 classes), Geometric Shapes (16$\times$16, 4 classes), and Uppercase Letters (16$\times$16, 10 classes). Further details are provided in Appendix \ref{app_5}.

Given the current scale of quantum resources, we focus on small-scale grayscale image benchmarks. Accordingly, we curated four datasets derived from three public repositories: MNIST~\citep{deng2012mnist}, EMNIST~\citep{cohen2017emnistextensionmnisthandwritten}, and HDS~\citep{robert2022hds}. These benchmarks include two MNIST variants (original $28\times28$ and downsampled $16\times16$, 10 classes), Geometric Shapes ($16\times16$, 4 classes), and Uppercase Letters ($16\times16$, 10 classes). Further details are provided in \ref{app_5}.

\paragraph{Baselines and Evaluation Metric}

Our objective is direct end-to-end, pixel-level synthesis of full images using a single quantum circuit, and we therefore choose PQWGAN~\citep{tsang2023hybrid} as the baseline.
With \texttt{Patch=1} (i.e., without patch-wise decomposition), PQWGAN can be viewed as a naive end-to-end variant of the prevailing patch-based QGAN paradigm, enabling a direct and controlled comparison to assess whether our architecture alleviates the inherent limitations of conventional designs. We denote the \texttt{Patch=1} configuration of PQWGAN as PQWGAN(Global).
In addition,
we also benchmark our method against the patch-based quantum model PQWGAN and the classical model WGAN-GP. PQWGAN uses a fully quantum generator and represents the current state-of-the-art for directly modeling pixel data. Classical WGAN-GP can be considered a classical analogue of BasicQGAN. Generated images are evaluated using the Fr\'echet Inception Distance (FID) \citep{yang2025ihqgan,karras2019style}, a standard metric for visual quality and diversity, where a lower score indicates better performance.
% We evaluate the quality and diversity of the generated images using the Frechet Inception Distance (FID). A lower FID score indicates that the generated images are closer to real images in both visual quality and diversity. 
The FID is calculated using formula
$F I D=\left\|\mu_r-\mu_g\right\|^2+\operatorname{Tr}\left(\Sigma_r+\Sigma_g-2\left(\Sigma_r \Sigma_g\right)^{1 / 2}\right)$,
where $\mu_r$ and $\mu_g$ represent the feature means of the real images and the generated images, respectively. $\Sigma_r$ and $\Sigma_g$ denote the feature covariance matrices of the real and generated images, respectively. $\text{Tr}$ means the matrix trace.

\paragraph{Implementation Details}
All experiments in this study were simulated using PyTorch and PennyLane on a 13th Gen Intel(R) Core(TM) i5-13490F CPU with 32.0 GB of RAM. The initial-state ensemble preparation algorithm was trained using Adam with an initial learning rate of 0.3. During adversarial training, both the classical critic and quantum generator were optimized using Adam with initial learning rates of 0.0002 and 0.01, respectively, and momentum hyperparameters of 0 and 0.9. All models were trained for 50 epochs with a batch size of 5.

\section{Results}

We organize the empirical evidence around the paper's main claim. First, we test whether a naive global QGAN baseline fails in the single-circuit end-to-end setting. Second, we evaluate whether QFL calibration restores stable training and improves generation quality. Third, we compare against patch-based and classical baselines across resolutions and datasets. Finally, we examine resource overhead and ablations to clarify where the proposed prior calibration helps and where it remains limited.

\subsection{Naive End-to-End Baseline}
As shown in Figure~\ref{Fid_3_1}, we compare BasicQGAN with PQWGAN(Global) on MNIST at $16\times16$.
PQWGAN(Global) removes patch-wise decomposition but keeps a conventional prior, making it a controlled test of naive single-circuit generation. BasicQGAN and PQWGAN(Global) use the same quantum generator circuit, so the comparison isolates the effect of QFL-guided prior calibration. PQWGAN(Global) produces blurrier samples with less distinguishable digit contours, while BasicQGAN yields clearer and more stable structures.
As reported in Figure~\ref{Fid_B_VS_PQWGAN}, BasicQGAN also achieves lower FID. This comparison supports our diagnosis that the main difficulty is not merely the use of one circuit, but the mismatch between the initial quantum prior and the data-induced fidelity structure.

\begin{figure}[H]
    \centering
   \includegraphics[width=\linewidth]{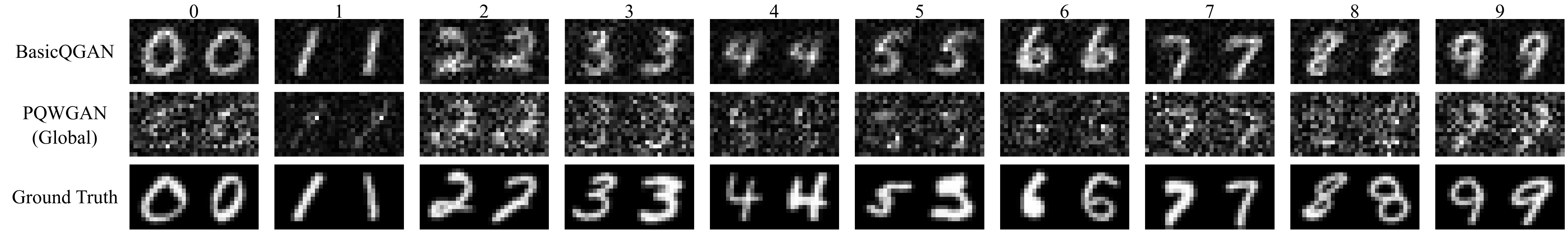}
    \caption{BasicQGAN produces clearer MNIST samples than the end-to-end PQWGAN(Global) baseline at $16\times16$ resolution.}
    \label{Fid_3_1}
\end{figure}

\begin{figure}[H]
    \centering
   \includegraphics[width=0.8\linewidth]{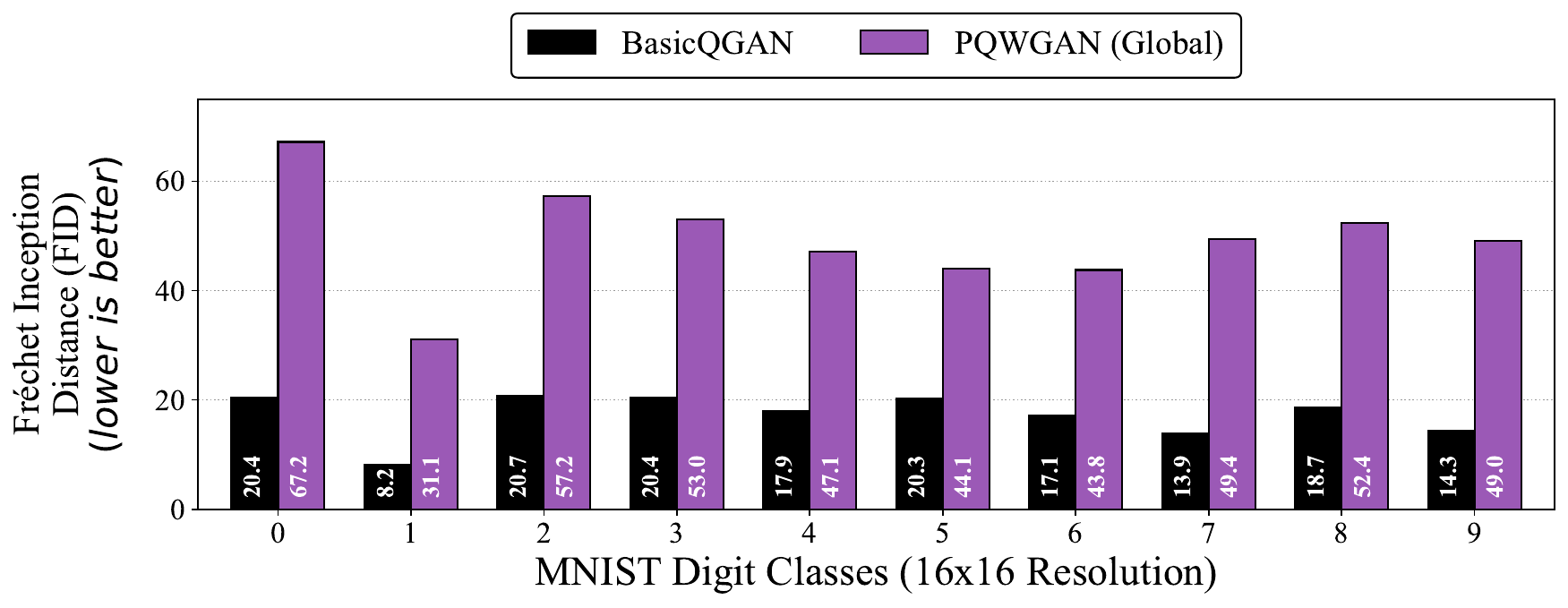}
    \caption{BasicQGAN obtains lower FID scores than PQWGAN(Global) across MNIST classes at $16\times16$ resolution.}
    \label{Fid_B_VS_PQWGAN}
\end{figure}

% \subsection{Effect of QFL Calibration}
% The quantum-prior ablation tests whether QFL calibration is associated with the observed improvement. As shown in Figure~\ref{crop_ablation_study6}, we replace the calibrated prior with $Z_i\!\sim\!U[0,\pi]$, $Z_i\!\sim\!U[0,1]$, and an extreme narrow setting $Z_0\!\sim\!U[0,0.01]$. These alternatives break QFL alignment and lead to either complete training failure or severe mode collapse. In contrast, our initial-state ensemble preparation algorithm enables stable training and produces diverse samples, achieving the lowest FID of 20.38 in Table~\ref{tab:sampling interval}. These results provide empirical support for the diagnosis--method link in the tested fixed-readout architecture: the evaluated mismatched priors fail, whereas QFL calibration mitigates this failure.

\subsection{Impact of Prior QFL on Generation Performance}
The quantum-prior ablation examines how different prior-induced QFLs affect generation performance.  
% The quantum-prior ablation tests whether QFL calibration is associated with the observed improvement. 
As shown in Figure~\ref{crop_ablation_study6}, we replace the calibrated prior with $Z_i\!\sim\!U[0,\pi]$, $Z_i\!\sim\!U[0,1]$, and an extreme narrow setting $Z_0\!\sim\!U[0,0.01]$. The QFLs induced by these priors have been shown earlier in Figure~\ref{Hypothesis}; all of them deviate visibly from the real-data QFL. Consistently, the broad priors lead to complete training failure in Figure~\ref{crop_ablation_study6}(a) and Figure~\ref{crop_ablation_study6}(b), while the overly narrow prior results in severe mode collapse in Figure~\ref{crop_ablation_study6}(c). This indicates that generation training becomes difficult when the prior-induced QFL differs substantially from the real-data QFL. In particular, when the prior sampling range is too narrow, the generated quantum states become overly concentrated in Hilbert space, making the subsequent adversarial training prone to mode collapse. In contrast, our initial-state ensemble preparation algorithm calibrates the quantum prior and enables the subsequent QGAN training to generate diverse samples, as shown in Figure~\ref{crop_ablation_study6}(d), achieving the lowest FID of 20.38 in Table~\ref{tab:sampling interval}. 
% These results provide empirical support for the diagnosis--method link in the tested fixed-readout architecture: the evaluated mismatched priors fail, whereas QFL calibration mitigates this failure.

\begin{figure}[H]
    \centering
   \includegraphics[width=0.68\linewidth]{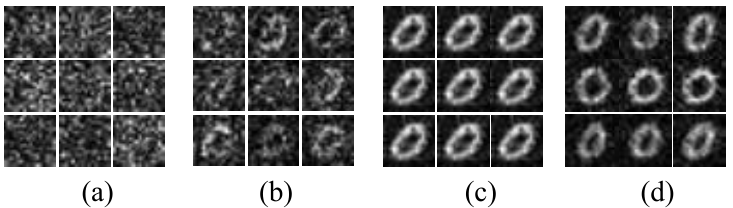}
    \caption{Ablation study on the initial-state preparation algorithm for BasicQGAN on the 16$\times$16 digit '0' generation task. (a) and (b) show complete training failure resulting from sampling initial parameters from wide intervals, $U[0, \pi]$ and $U[0, 1]$, respectively. (c) demonstrates severe mode collapse when using an extremely narrow interval, $U[0, 0.01]$. In contrast, (d) shows that our proposed algorithm successfully generates a diverse set of high-quality images. 
    % demonstrating that our proposed preparation algorithm is crucial for the successful operation of BasicQGAN.
    }
    \label{crop_ablation_study6}
\end{figure}

\begin{table}[H]
    %\caption{Quantitative comparison of the full BasicQGAN against its variants, where our initial-state ensemble preparation method is replaced with baseline schemes.}
    \caption{Quantitative comparison of BasicQGAN against variants with different initial-state ensemble preparation methods.}
    \label{tab:sampling interval} 
    \small
    \centering
\begin{tabular}{@{}p{7cm} c@{}} 
        \toprule
        Model & FID\\
        \midrule 
        BasicQGAN with $Z_i \sim U(0, \pi)$ & 73.93\\
        BasicQGAN with $Z_i \sim U(0, 1)$ & 72.64\\
        \makecell[l]{BasicQGAN $Z_0 \sim U(0, 0.01)$, $Z_i = 0 \, (i>0)$} & 33.97\\
        BasicQGAN (Our algorithm)  & \textbf{20.38}\\  
        \bottomrule 
    \end{tabular}
    \captionsetup{font=normalsize} 
\end{table}
%\vspace{-1ex}
\normalsize

\subsection{Multi-Scale Performance}
We evaluate different methods at two resolutions. As shown in Figures~\ref{crop_MNIST}--\ref{FID_scores_28x28_with_labels2}, BasicQGAN and WGAN-GP produce relatively natural and smooth samples at $16\times16$ resolution, whereas PQWGAN generates visually stiffer digits and yields higher FID. BasicQGAN already produces images close to those of the classical WGAN-GP baseline, indicating that single-circuit end-to-end generation is feasible at lower resolution. 
% The quantitative results for both resolutions are summarized in Table~\ref{tab:FID_scores}.
At $28\times28$ resolution, BasicQGAN exhibits more visible noise in its generated samples. The FID scores of the three methods are generally close, with BasicQGAN slightly higher than the other baselines; meanwhile, PQWGAN still shows noticeable discrete artifacts in its generated samples. We attribute the degradation of BasicQGAN at $28\times28$ mainly to two factors: (1) the significant increase in data dimensionality and computational complexity, which enlarges the QFL matching error; and (2) the increased complexity of the image distribution, which may exceed the expressive capacity of the current quantum circuit. Thus, the $28\times28$ results further show the feasibility of single-circuit end-to-end generation in a higher-dimensional image space, while also leaving room for improving generation quality at higher resolutions.

\begin{figure}[H]
    \centering
    \begin{minipage}[c]{0.68\textwidth}
        \centering
        \includegraphics[width=\linewidth]{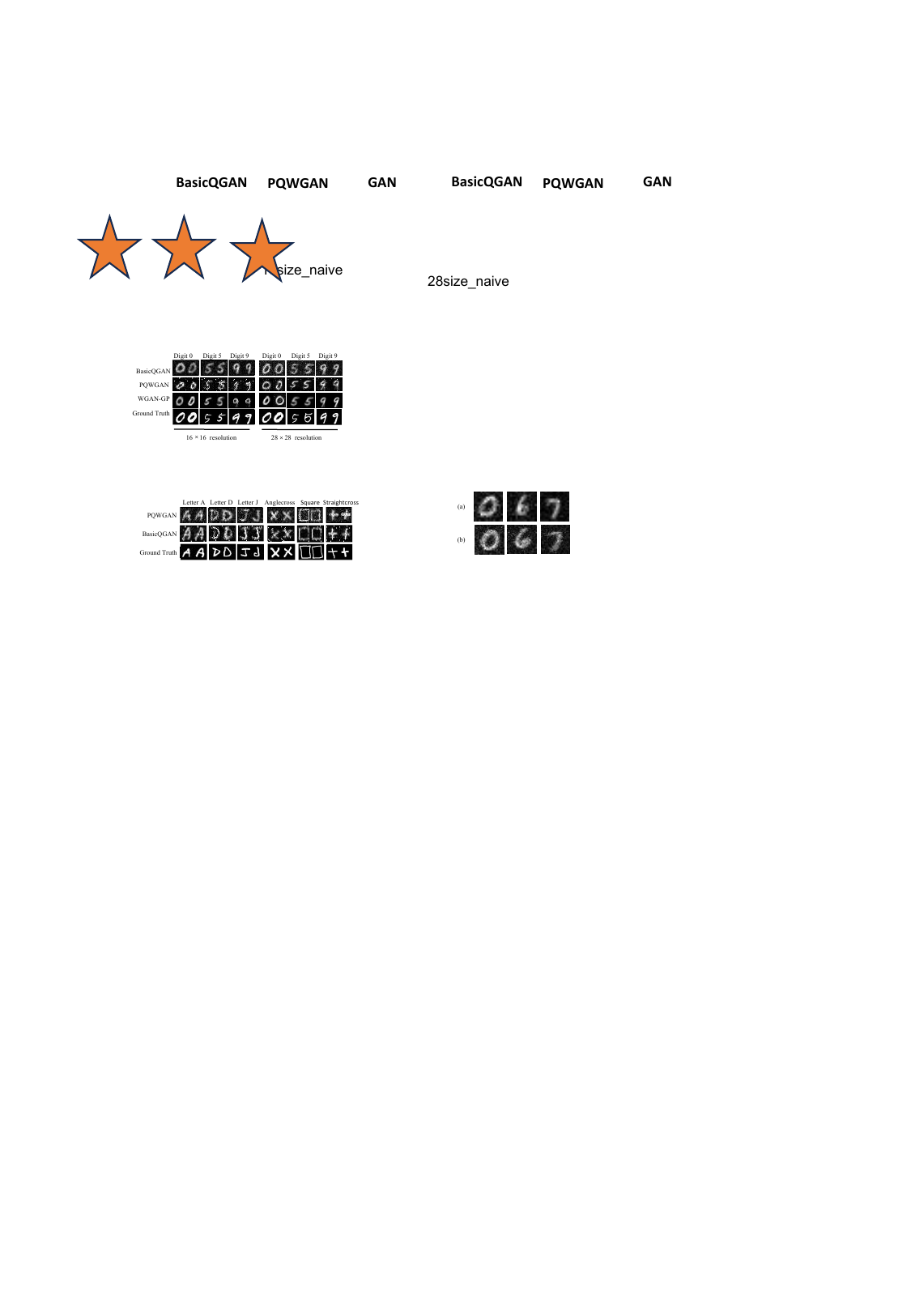}
    \end{minipage}
    \hspace{0.035\textwidth}
    \begin{minipage}[c]{0.22\textwidth}
        \centering
        \includegraphics[width=\linewidth]{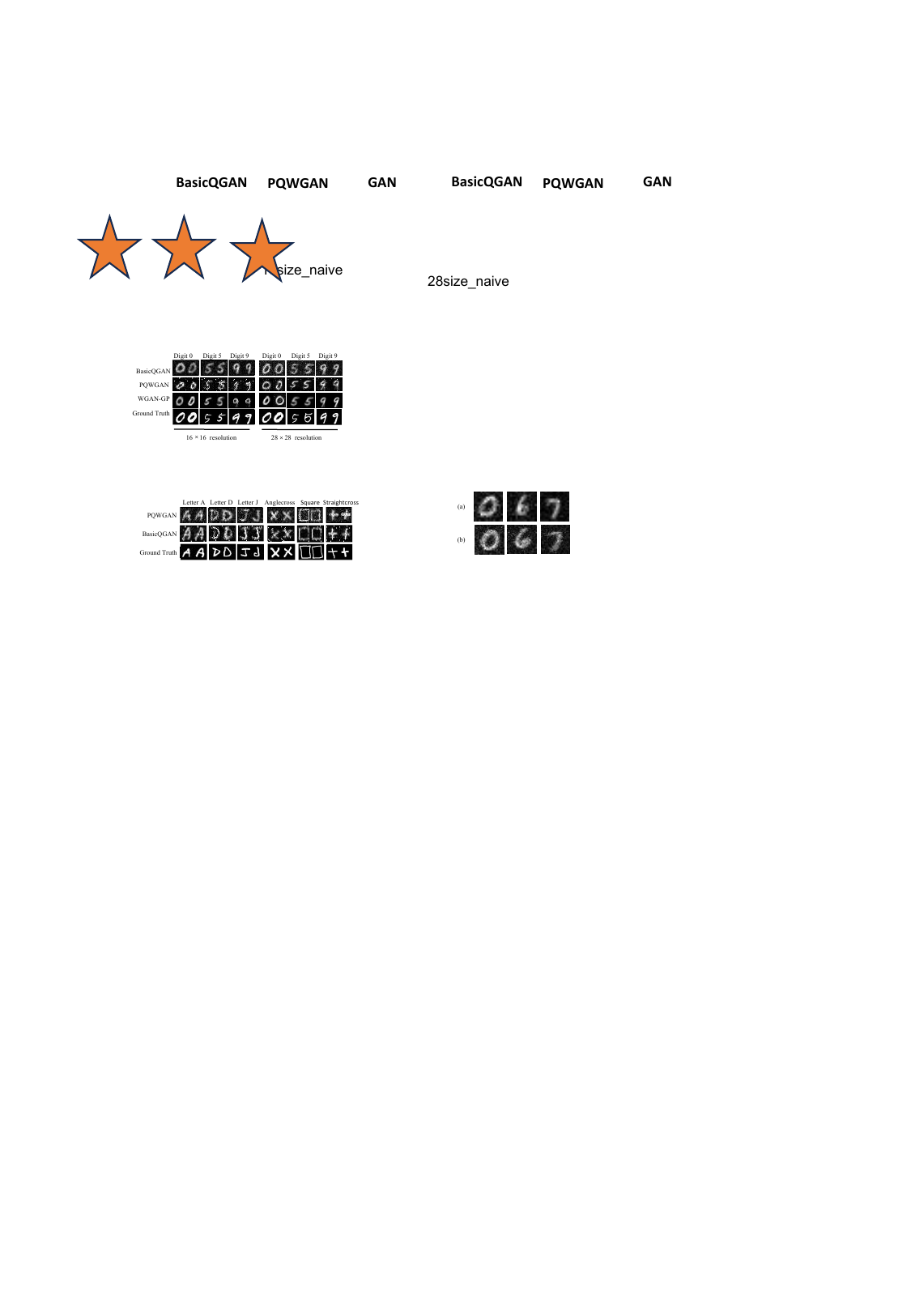}
    \end{minipage}
    \caption{Visual comparison of generated MNIST samples at $16\times16$ and $28\times28$ resolutions. Left: BasicQGAN and WGAN-GP produce relatively smooth and natural images at both resolutions, whereas the images generated by PQWGAN are blurrier and less natural, particularly for digits `5' and `9'. Right: magnified BasicQGAN samples at $16\times16$ resolution (a) and $28\times28$ resolution (b), where the latter exhibits increased noise artifacts.}
    \label{crop_MNIST}
    \label{crop_16_28}
\end{figure}

\begin{figure}[H]
    \centering
   \includegraphics[width=0.8\textwidth, trim=0cm 0cm 0cm 0cm, clip]{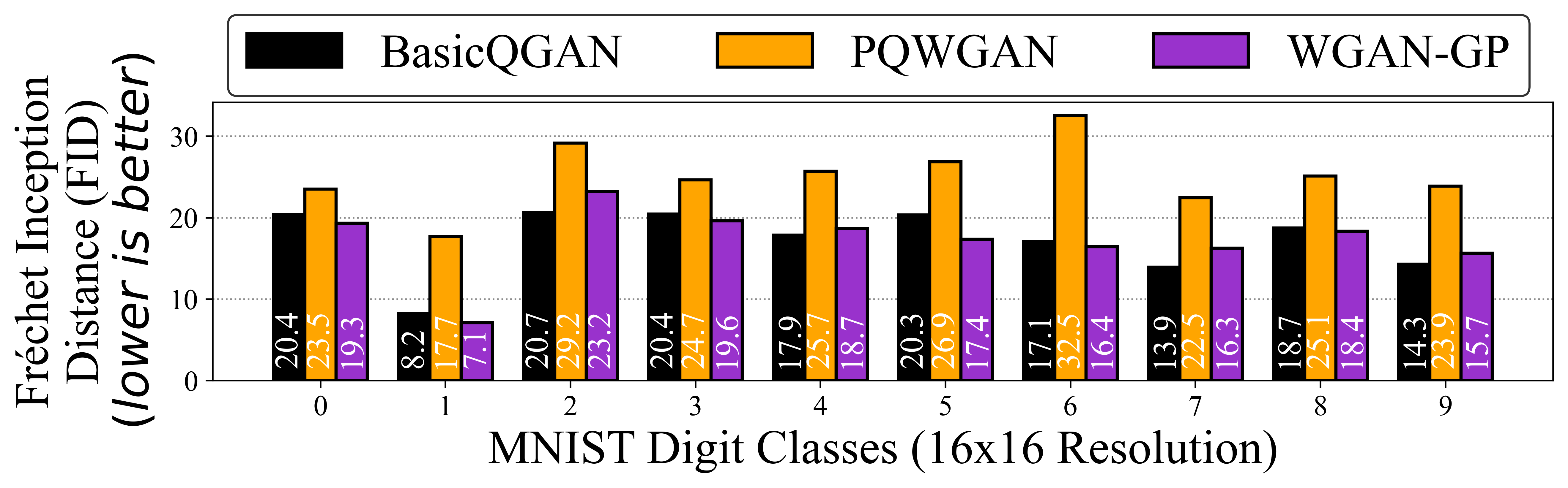}
    \caption{
 WGAN-GP and BasicQGAN achieve similar FID scores across different classes of the 16$\times$16 MNIST dataset, and both consistently obtain lower scores than PQWGAN.
    }
    \label{FID_scores_16x16_with_labels2}
\end{figure}

\begin{figure}[H]
    \centering
   \includegraphics[width=0.8\textwidth, trim=0cm 0cm 0cm 0cm, clip]{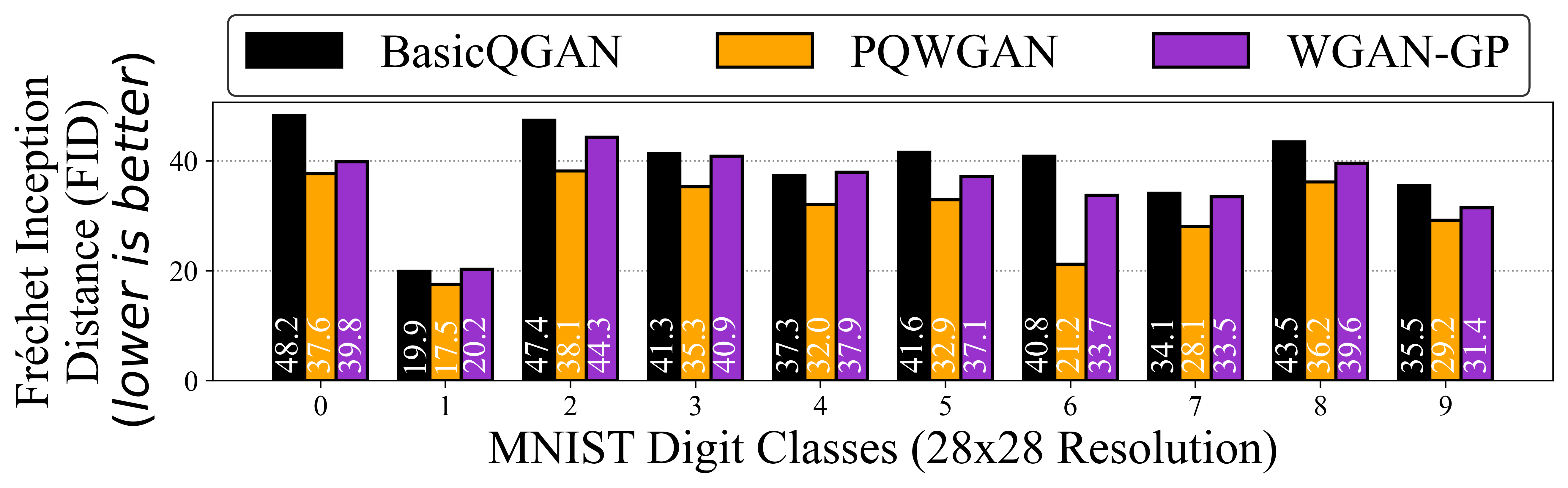}
    \caption{ 
    At $28\times28$ resolution, BasicQGAN obtains higher FID scores than the baselines, while PQWGAN achieves the lowest FID scores among the compared methods.
    }
    \label{FID_scores_28x28_with_labels2}
\end{figure}

% \begin{table}[H]
% \caption{FID scores on MNIST across digit classes under different resolutions. Lower is better.}
% \label{tab:FID_scores}
% \centering
% \small
% \setlength{\tabcolsep}{2.5pt}
% \renewcommand{\arraystretch}{1.1}
% \begin{tabular}{c|ccc|ccc}
% \toprule
% & \multicolumn{3}{c|}{\textbf{16$\times$16}} 
% & \multicolumn{3}{c}{\textbf{28$\times$28}} \\
% \cmidrule(lr){2-4} \cmidrule(lr){5-7}
% Digit 
% & \makecell{Basic\\QGAN} 
% & \makecell{PQ\\WGAN} 
% & \makecell{WGAN\\-GP} 
% & \makecell{Basic\\QGAN} 
% & \makecell{PQ\\WGAN} 
% & \makecell{WGAN\\-GP} \\
% \midrule
% 0 & 20.4 & 23.5 & \textbf{19.3} & 48.2 & \textbf{37.6} & 39.8 \\
% 1 &  8.2 & 17.7 & \textbf{7.1}  & 19.9 & \textbf{17.5} & 20.2 \\
% 2 & \textbf{20.7} & 29.2 & 23.2 & 47.4 & \textbf{38.1} & 44.3 \\
% 3 & 20.4 & 24.7 & \textbf{19.6} & 41.3 & \textbf{35.3} & 40.9 \\
% 4 & \textbf{17.9} & 25.7 & 18.7 & 37.3 & \textbf{32.0} & 37.9 \\
% 5 & 20.3 & 26.9 & \textbf{17.4} & 41.6 & \textbf{32.9} & 37.1 \\
% 6 & 17.1 & 32.5 & \textbf{16.4} & 40.8 & \textbf{21.2} & 33.7 \\
% 7 & \textbf{13.9} & 22.5 & 16.3 & 34.1 & \textbf{28.1} & 33.5 \\
% 8 & 18.7 & 25.1 & \textbf{18.4} & 43.5 & \textbf{36.2} & 39.6 \\
% 9 & \textbf{14.3} & 23.9 & 15.7 & 35.5 & \textbf{29.2} & 31.4 \\
% \bottomrule
% \end{tabular}
% \end{table}

\subsection{Cross-Dataset Generalization}
We evaluate the cross-dataset generalization ability of different quantum models on two additional $16\times16$ datasets: Geometric Shapes and Uppercase Letters. As shown in Figure~\ref{crop_Geo_Uppercase}, BasicQGAN generates relatively smooth and natural samples on both datasets. In contrast, PQWGAN produces blurrier and less natural samples. Quantitatively, Figures~\ref{Fid_Uppecase} and~\ref{Fid_Geometric} show that BasicQGAN achieves the lowest FID on both datasets.

\begin{figure}[H]
    \centering
   \includegraphics[width=0.8\textwidth]{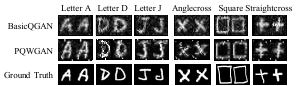}
    % \caption{Quantum model performance comparison on the Uppercase Letters and Geometric Shapes datasets. BasicQGAN produces smoother samples, while PQWGAN exhibits noticeable blurriness, edge artifacts, and discrete white noise in several outputs, such as the uppercase letters `D' and `J'.}
    \caption{Quantum model performance comparison on the Uppercase Letters and Geometric Shapes datasets. BasicQGAN produces smoother and more natural samples, whereas PQWGAN generates blurrier and less natural outputs.}
    \label{crop_Geo_Uppercase}
\end{figure}

\begin{figure}[H]
    \centering
   \includegraphics[width=0.8\textwidth, trim=0cm 0cm 0cm 0cm, clip]{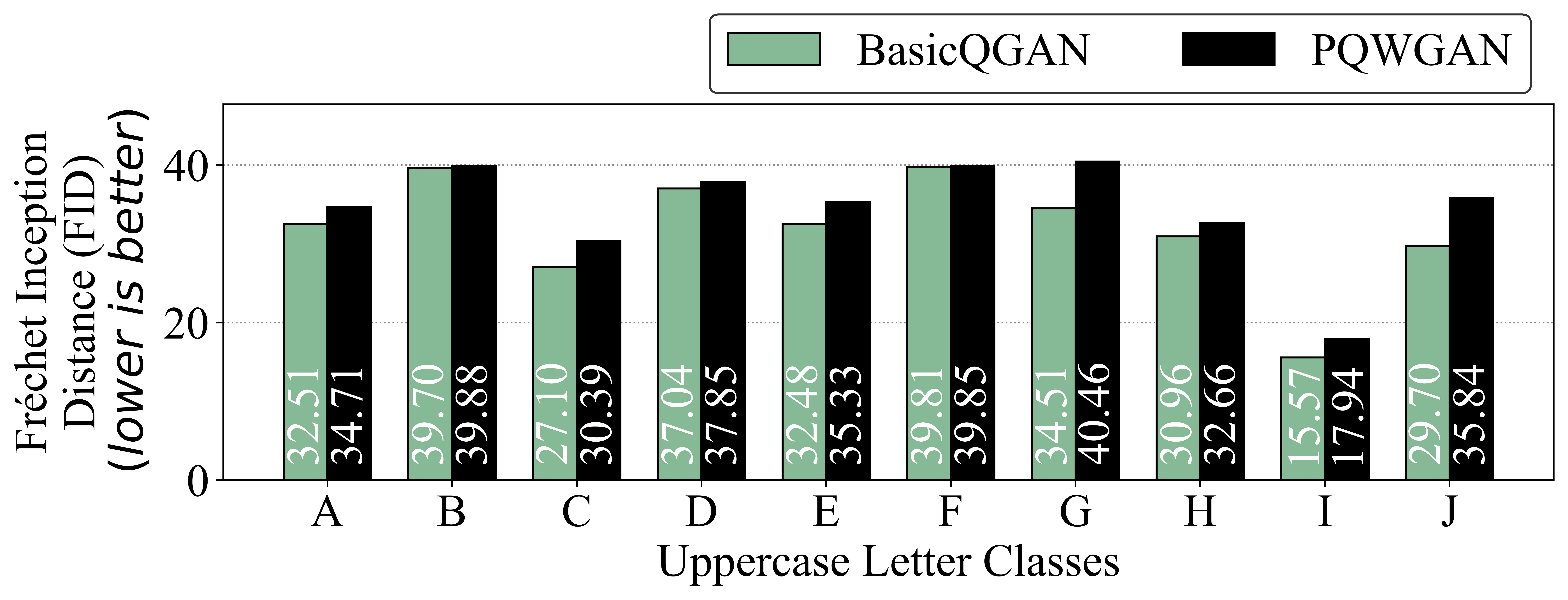}
    \caption{BasicQGAN obtains lower FID scores than PQWGAN across different classes of the Uppercase Letter dataset.}
    \label{Fid_Uppecase}
\end{figure}

\begin{figure}[H]
    \centering
   \includegraphics[width=0.8\textwidth, trim=0cm 0cm 0cm 0cm, clip]{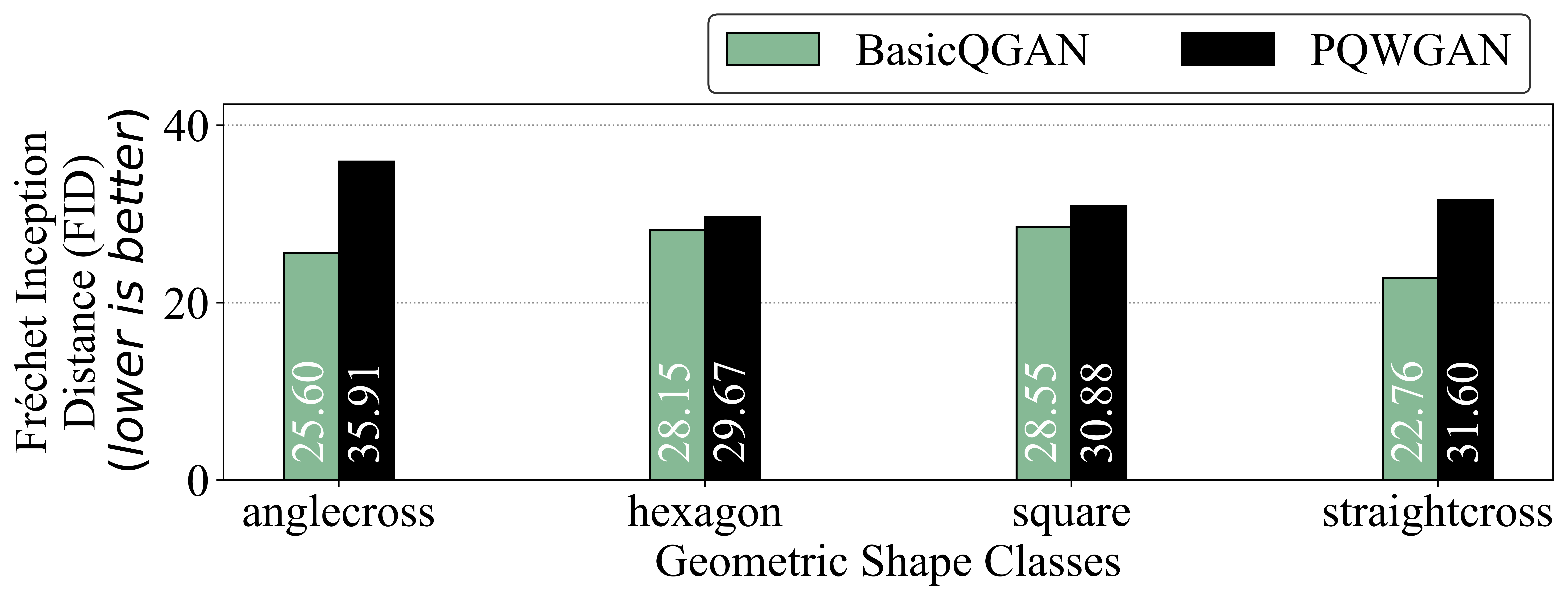}
    \caption{BasicQGAN obtains lower FID scores than PQWGAN across different classes of the Geometric Shape dataset.}
    \label{Fid_Geometric}
\end{figure}

\subsection{Resource Efficiency}
Table~\ref{tab:generator_parameter} summarizes the generator resource overhead of the compared methods at $16\times16$ and $28\times28$ resolutions. Among the quantum models, BasicQGAN consistently requires fewer qubits and trainable parameters than both PQWGAN(Global) and the patch-based PQWGAN. The difference becomes more pronounced at the higher resolution, where the resource demand of patch-based PQWGAN increases substantially. Compared with the classical WGAN-GP, BasicQGAN also uses a much smaller number of trainable parameters. Overall, BasicQGAN maintains a compact generator architecture across both resolutions.

\begin{table}[H]
    \caption{
    Comparison of the resource overhead of the generators from different methods on the MNIST dataset at $16\times16$ and $28\times28$ resolutions.}
    \label{tab:generator_parameter}
    \centering
    \small
    \setlength{\tabcolsep}{1pt} 
    \begin{tabular}{llcccc} 
    \toprule
    Method & Type & \multicolumn{2}{c}{\makecell{ MNIST Dataset \\16$\times$16}} & \multicolumn{2}{c}{\makecell{MNIST Dataset \\28$\times$28}} \\
    \cmidrule(lr){3-4} \cmidrule(lr){5-6} 
        & & \makecell{Qubit\\Count} & \makecell{Parameter\\Count} &   \makecell{Qubit\\Count} &\makecell{Parameter\\Count} \\ 
        \midrule
        PQWGAN(Global) & Quantum & 9 & 540 & 11 & 660 \\ 
        BasicQGAN & Quantum & 8 & 320& 10 & 400 \\ 
        PQWGAN & Quantum & 80 & 800 & 168 & 5120 \\ 
        WGAN-GP & Classical & - & 945152 & - & 1732352 \\
        \bottomrule
    \end{tabular}
\end{table}
% \subsection{Additional Ablations}
% \paragraph{Quantum-Prior Ablation}
% As shown in Figure~\ref{crop_ablation_study6},
% we conduct a quantum-prior ablation on the 16$\times$16 digit 0. The detailed results further support the main QFL-calibration finding reported above.

\subsection{Auxiliary-System Ablation}
BasicQGAN employs an $n$-qubit unitary generator without auxiliary qubits. To examine whether QFL calibration also applies when the quantum generator is extended with an auxiliary system, we construct BasicQGAN w/ Ancilla qubit by appending one auxiliary qubit initialized identically as $R_Y(0)\lvert 0\rangle=\lvert 0\rangle$. Thus, the $i$-th calibrated initial state becomes
\begin{equation}
\widetilde{\rho}_i
=
\rho_i\otimes\lvert 0\rangle\langle 0\rvert.
\end{equation}
By the multiplicativity of quantum fidelity under tensor products \citep{nielsen2010quantum},
\begin{equation}
\begin{aligned}
F(\widetilde{\rho}_i,\widetilde{\rho}_j)
&=F(\rho_i,\rho_j)
F(\lvert 0\rangle\langle0\rvert,\lvert 0\rangle\langle0\rvert)\\
&=F(\rho_i,\rho_j).
\end{aligned}
\end{equation}
Therefore, appending the auxiliary state leaves the prior-induced QFL unchanged, showing that the QFL-calibrated initial-state construction extends directly to the enlarged system. The subsequent shared unitary evolution also preserves the joint-state fidelities. We compare BasicQGAN with BasicQGAN w/ Ancilla qubit on MNIST at $16\times16$ resolution. As shown in Figures~\ref{Fid_3_2} and~\ref{Fid_B_VS_B_WAQ}, BasicQGAN w/ Ancilla qubit produces recognizable samples and obtains FID scores comparable to those of BasicQGAN. These results demonstrate that QFL calibration is also applicable to this auxiliary-system architecture.

\begin{figure}[H]
    \centering
   \includegraphics[width=\linewidth]{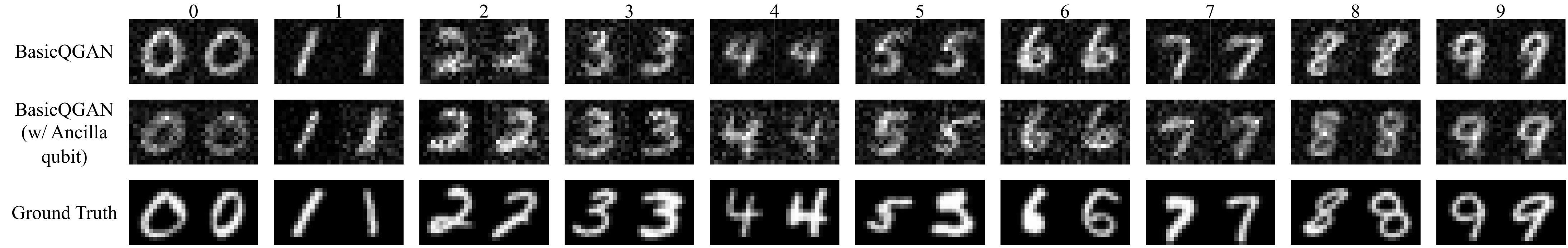}
    \caption{Comparison between BasicQGAN and BasicQGAN w/ Ancilla qubit on MNIST at $16\times16$ resolution.}
    \label{Fid_3_2}
\end{figure}

\begin{figure}[H]
    \centering
   \includegraphics[width=0.8\linewidth]{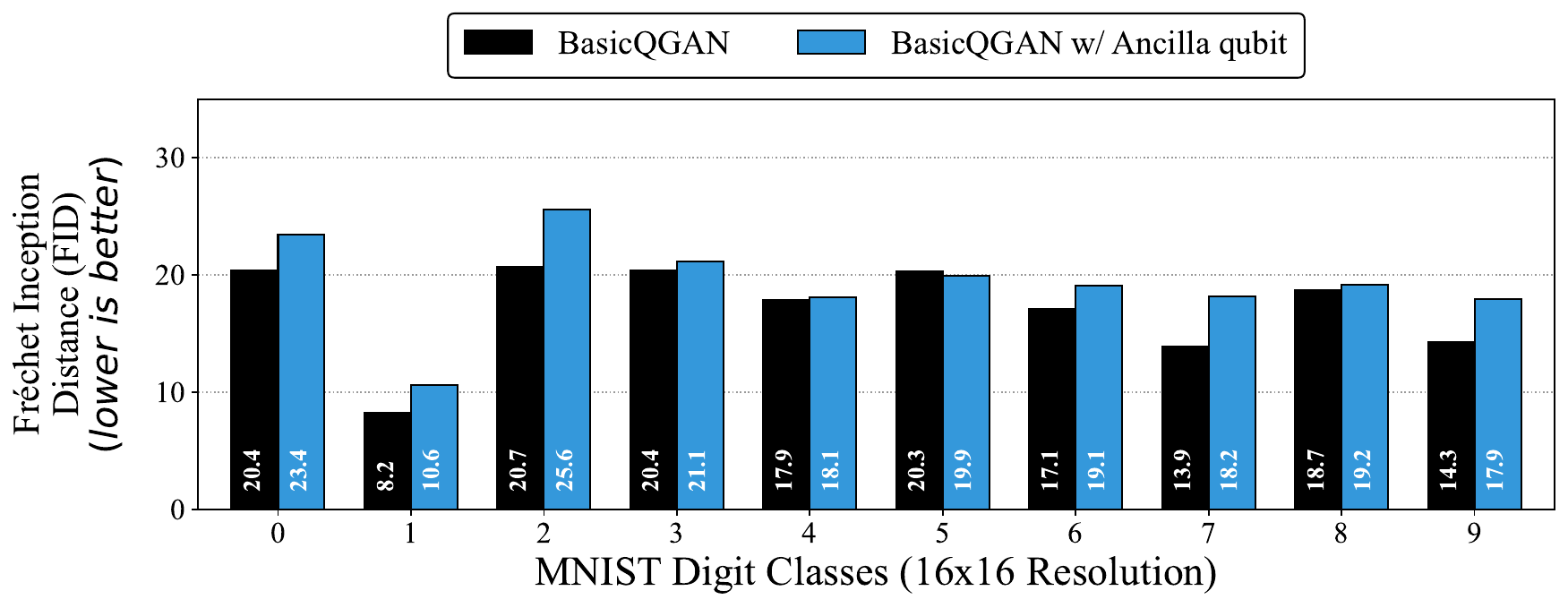}
    \caption{FID comparison between BasicQGAN and BasicQGAN w/ Ancilla qubit across different MNIST classes at $16\times16$ resolution.}
    \label{Fid_B_VS_B_WAQ}
\end{figure}

Further details regarding the aforementioned experiments are provided in ~\ref{app_6}. We also report additional experiments on the learning rate used by the initial-state ensemble preparation algorithm and on the impact of quantum circuit depth on BasicQGAN's performance in \ref{app_7} and~\ref{app_8}.

\section{Conclusion and Limitations}
\label{Conclusion}
We studied single-circuit, end-to-end pixel-level image generation with QGANs from the perspective of quantum-prior geometry. Our analysis identified the Quantum Fidelity Landscape (QFL) as a pairwise-fidelity structure fixed by the initial-state ensemble and preserved by shared unitary evolution. Under a fixed Lipschitz readout, this invariant further imposes a one-sided constraint on the separation of decoded samples, connecting the geometry of the quantum prior to the behavior of the generated distribution.

Guided by this insight, we developed BasicQGAN, which calibrates the prior-induced QFL against the data-induced QFL before WGAN-GP adversarial training. In the evaluated settings, QFL calibration avoids the training failure or mode collapse observed with mismatched priors and produces diverse samples. BasicQGAN achieves competitive generation performance on $16\times16$ MNIST and two additional grayscale datasets, while requiring fewer qubits and trainable parameters than representative patch-based quantum generators. The auxiliary-system experiment further shows that the same QFL calibration principle applies when the generator is extended with an ancilla qubit.

The present study has several limitations. The experiments are conducted in ideal, noise-free simulation, and the behavior on real NISQ hardware remains to be evaluated. The current validation also focuses mainly on small-scale grayscale images. In particular, BasicQGAN degrades at $28\times28$ resolution, indicating that improved circuit design and prior calibration strategies are needed for higher-resolution or multi-channel generation. Together, these results support QFL-guided prior calibration as a practical basis for compact, single-circuit QGAN image generation, while motivating further development toward higher-dimensional and hardware-based settings.

% \section{Acknowledgements}
% \label{Acknowledgements}
% This work was supported by the National Natural Science Foundation of China under Grant No. 62572292, in part by the China Scholarship Council
% (CSC) under Grant No. 202508310168, and by Shanghai Maritime University’s Top Innovative Talent Training Program for Graduate Students under Grant No. 2025YBR013.

% \section*{Impact Statement}

% This paper presents work whose goal is to advance the field of Quantum Machine Learning. There are many potential societal consequences of our work, none which we feel must be specifically highlighted here.

%%%%%%%%%%%%%%%%%%%%%%%%%%%%%%%%%%%%%%%%%%%%%%%%%%%%%%%%%%%%%%%%%%%%%%%%%%%%%%%
%%%%%%%%%%%%%%%%%%%%%%%%%%%%%%%%%%%%%%%%%%%%%%%%%%%%%%%%%%%%%%%%%%%%%%%%%%%%%%%
% APPENDIX
%%%%%%%%%%%%%%%%%%%%%%%%%%%%%%%%%%%%%%%%%%%%%%%%%%%%%%%%%%%%%%%%%%%%%%%%%%%%%%%
%%%%%%%%%%%%%%%%%%%%%%%%%%%%%%%%%%%%%%%%%%%%%%%%%%%%%%%%%%%%%%%%%%%%%%%%%%%%%%%
\newpage
\appendix
% ---- BEGIN INLINED THEORY PROOFS ----
\section{Proof of QFL Invariance}
\label{app:qfl_invariance}

Throughout the appendix, $\rho$ and $\sigma$ denote density operators on a finite-
dimensional Hilbert space, $\operatorname{Tr}$ denotes the matrix trace, and
$A^\dagger$ denotes the conjugate transpose of an operator $A$. Following the standard
definition in \citet{nielsen2010quantum}, we use the squared Uhlmann fidelity convention
\begin{equation}
 F(\rho,\sigma)
 =\left(\operatorname{Tr}\sqrt{\sqrt{\rho}\,\sigma\sqrt{\rho}}\right)^2.
 \label{eq:app_fidelity_definition}
\end{equation}
where $\sqrt{\rho}$ is the unique positive-semidefinite square root of $\rho$.
Unitary invariance is a standard property of quantum fidelity
\citep{nielsen2010quantum}: for any density operators
$\rho,\sigma$ and any unitary $U$,
\begin{equation}
 F(U\rho U^\dagger,U\sigma U^\dagger)=F(\rho,\sigma)
 \label{eq:app_unitary_invariance_theorem}
\end{equation}

\begin{proof}[Proof of Proposition~\ref{prop:qfl_invariance}]
Let $\mathcal{E}_0=\{\rho_i\}_{i=1}^{M}$ be an ensemble of $M$ initial states, let
$\theta$ denote the trainable circuit parameters, and define the generated ensemble
$\mathcal{E}_\theta=\{\rho_i^\theta\}_{i=1}^{M}$. For every pair of indices
$i,j\in\{1,\ldots,M\}$, the shared generator produces
\begin{equation}
 \rho_i^\theta=U_\theta\rho_iU_\theta^\dagger,
 \qquad
 \rho_j^\theta=U_\theta\rho_jU_\theta^\dagger.
\end{equation}
Applying Eq.~\eqref{eq:app_unitary_invariance_theorem} with $U=U_\theta$ gives
\begin{align}
 F(\rho_i^\theta,\rho_j^\theta)
 &=F(U_\theta\rho_iU_\theta^\dagger,
      U_\theta\rho_jU_\theta^\dagger)\\
 &=F(\rho_i,\rho_j).
 \label{eq:app_qfl_pair_invariance}
\end{align}
Equation~\eqref{eq:app_qfl_pair_invariance} holds for every pair and every $\theta$.
Therefore, every entry of the QFL matrix $K(\mathcal{E})$, defined in
Eq.~\eqref{eq:qfl_matrix}, satisfies
$K_{ij}(\mathcal{E}_\theta)=F(\rho_i^\theta,\rho_j^\theta)$ equals the corresponding
entry of $K_{ij}(\mathcal{E}_0)$. The complete QFL matrix, and consequently its empirical
off-diagonal distribution, is invariant under generator updates.

For completeness, in the pure-state setting used by BasicQGAN, write
$|\psi_i^\theta\rangle=U_\theta|\psi_i\rangle$ and
$|\psi_j^\theta\rangle=U_\theta|\psi_j\rangle$. Since a unitary satisfies
$U_\theta^\dagger U_\theta=I$, where $I$ is the identity operator,
\begin{align}
 F(\rho_i^\theta,\rho_j^\theta)
 &=|\langle\psi_i^\theta|\psi_j^\theta\rangle|^2\\
 &=|\langle\psi_i|U_\theta^\dagger U_\theta|\psi_j\rangle|^2\\
 &=|\langle\psi_i|\psi_j\rangle|^2
 =F(\rho_i,\rho_j),
\end{align}
which gives the same result directly.
\end{proof}

\section{Proof of the General Output-Separation Bound}
\label{app:readout_bound}

We first recall the standard definitions of trace norm and trace distance from quantum
information theory \citep{nielsen2010quantum}. For a matrix $A$, let
$\|A\|_1:=\operatorname{Tr}\sqrt{A^\dagger A}$ denote its trace norm. The trace distance
between density operators is
\begin{equation}
 D_{\mathrm{tr}}(\rho,\sigma)=\frac{1}{2}\|\rho-\sigma\|_1,
 \label{eq:app_trace_distance}
\end{equation}
For two probability vectors $p=(p_k)_{k=1}^{K}$ and $q=(q_k)_{k=1}^{K}$ over the same
$K$ measurement outcomes, we use the total variation distance convention of
\citet{gibbs2002choosing}. This quantity is also the classical trace distance used by
\citet[Section~9.1]{nielsen2010quantum}:
\begin{equation}
 \operatorname{TV}(p,q)=\frac{1}{2}\sum_k|p_k-q_k|.
 \label{eq:app_tv_distance}
\end{equation}

\paragraph{Result 1: contractivity under measurement.}
In the standard channel formulation of quantum measurements
\citep{nielsen2010quantum}, any fixed measurement $\mathcal{M}$ is a completely positive
trace-preserving (CPTP) quantum-to-classical map. Let $\Lambda$ denote an arbitrary CPTP
map. The data-processing inequality for trace distance states that trace distance is
contractive under $\Lambda$ \citep{nielsen2010quantum}:
\begin{equation}
 D_{\mathrm{tr}}(\Lambda(\rho),\Lambda(\sigma))
 \leq D_{\mathrm{tr}}(\rho,\sigma).
 \label{eq:app_trace_contractivity}
\end{equation}
For the measurement channel $\mathcal{M}$, let
$p_k=\operatorname{Tr}(E_k\rho)$ and $q_k=\operatorname{Tr}(E_k\sigma)$, where
$\{E_k\}_{k=1}^{K}$ is its positive-operator-valued measure, satisfying
$E_k\succeq0$ and $\sum_{k=1}^{K}E_k=I$. This is the standard positive-operator-valued
measure (POVM) representation \citep{nielsen2010quantum}. A quantum-to-classical
representation of the measurement stores the outcome probabilities in diagonal density
operators:
\begin{equation}
 \mathcal{M}(\rho)=\sum_k p_k|k\rangle\langle k|,
 \qquad
 \mathcal{M}(\sigma)=\sum_k q_k|k\rangle\langle k|.
\end{equation}
Using Eqs.~\eqref{eq:app_trace_distance} and~\eqref{eq:app_tv_distance}, their trace
distance is exactly
\begin{align}
 D_{\mathrm{tr}}(\mathcal{M}(\rho),\mathcal{M}(\sigma))
 &=\frac12\left\|
     \sum_k(p_k-q_k)|k\rangle\langle k|
   \right\|_1\\
 &=\frac12\sum_k|p_k-q_k|\\
 &=\operatorname{TV}(p,q).
 \label{eq:app_classical_trace_tv}
\end{align}
The second equality follows because the trace norm of a diagonal Hermitian matrix is the
sum of the absolute values of its eigenvalues. Combining
Eqs.~\eqref{eq:app_trace_contractivity} and~\eqref{eq:app_classical_trace_tv} gives
\begin{equation}
 \operatorname{TV}(p,q)\leq D_{\mathrm{tr}}(\rho,\sigma).
 \label{eq:app_measurement_contraction}
\end{equation}

\paragraph{Result 2: Fuchs--van de Graaf inequalities.}
Under the squared-fidelity convention in Eq.~\eqref{eq:app_fidelity_definition}, the
Fuchs--van de Graaf inequalities \citep{fuchs1999cryptographic,nielsen2010quantum} are
\begin{equation}
 1-\sqrt{F(\rho,\sigma)}
 \leq D_{\mathrm{tr}}(\rho,\sigma)
 \leq\sqrt{1-F(\rho,\sigma)}.
 \label{eq:app_fvdg}
\end{equation}
Only the upper bound is needed below.

\begin{proof}[Proof of Theorem~\ref{thm:readout_bound}]
Let $(\mathcal{X},d_{\mathcal X})$ be the metric space of classical outputs. Let
$h$ be the fixed classical post-processing map in
Assumption~\ref{ass:stable_readout}, and let $L<\infty$ be its Lipschitz constant with
respect to total variation distance.
Let
\begin{equation}
 \rho_i^\theta=U_\theta\rho_iU_\theta^\dagger,
 \qquad
 p_i^\theta=\mathcal{M}(\rho_i^\theta),
\end{equation}
and define $\rho_j^\theta$ and $p_j^\theta$ analogously. Because
$\mathcal{D}=h\circ\mathcal{M}$,
\begin{equation}
 \mathcal{D}(\rho_i^\theta)=h(p_i^\theta),
 \qquad
 \mathcal{D}(\rho_j^\theta)=h(p_j^\theta).
 \label{eq:app_readout_composition}
\end{equation}

We now apply the assumptions and standard results in sequence. First, the stable-readout
condition in Assumption~\ref{ass:stable_readout} gives
\begin{align}
 &d_{\mathcal X}\!\left(
   \mathcal{D}(\rho_i^\theta),\mathcal{D}(\rho_j^\theta)
   \right)\nonumber\\
 &\quad=d_{\mathcal X}\!\left(h(p_i^\theta),h(p_j^\theta)\right)\\
 &\quad\leq L\operatorname{TV}(p_i^\theta,p_j^\theta).
 \label{eq:app_step_lipschitz}
\end{align}
Second, measurement contractivity in Eq.~\eqref{eq:app_measurement_contraction} yields
\begin{equation}
 \operatorname{TV}(p_i^\theta,p_j^\theta)
 \leq D_{\mathrm{tr}}(\rho_i^\theta,\rho_j^\theta).
 \label{eq:app_step_measurement}
\end{equation}
Third, the upper Fuchs--van de Graaf bound in Eq.~\eqref{eq:app_fvdg} gives
\begin{equation}
 D_{\mathrm{tr}}(\rho_i^\theta,\rho_j^\theta)
 \leq\sqrt{1-F(\rho_i^\theta,\rho_j^\theta)}.
 \label{eq:app_step_fvdg}
\end{equation}
Finally, Proposition~\ref{prop:qfl_invariance} gives
\begin{equation}
 F(\rho_i^\theta,\rho_j^\theta)=F(\rho_i,\rho_j).
 \label{eq:app_step_unitary}
\end{equation}
Substituting Eqs.~\eqref{eq:app_step_measurement}--\eqref{eq:app_step_unitary} into
Eq.~\eqref{eq:app_step_lipschitz}, we obtain
\begin{align}
 &d_{\mathcal X}\!\left(
   \mathcal{D}(\rho_i^\theta),\mathcal{D}(\rho_j^\theta)
   \right)\nonumber\\
 &\quad\leq L\operatorname{TV}(p_i^\theta,p_j^\theta)\\
 &\quad\leq L D_{\mathrm{tr}}(\rho_i^\theta,\rho_j^\theta)\\
 &\quad\leq L\sqrt{1-F(\rho_i^\theta,\rho_j^\theta)}\\
 &\quad= L\sqrt{1-F(\rho_i,\rho_j)},
\end{align}
which is Eq.~\eqref{eq:general_output_bound}.
\end{proof}

\paragraph{Uniformity over trainable classical decoders.}
Theorem~\ref{thm:readout_bound} also clarifies the role of a trainable classical
post-processing map. Consider a family of decoders
$\mathcal{D}_\phi=h_\phi\circ\mathcal{M}$ indexed by trainable parameters
$\phi\in\Phi$. For each fixed $\phi$, the proof above remains valid with the corresponding
Lipschitz constant $L_\phi:=\operatorname{Lip}(h_\phi)$. A bound that is uniform over
training requires a finite constant
\begin{equation}
 L_*:=\sup_{\phi\in\Phi}L_\phi<\infty,
 \label{eq:app_uniform_decoder_lipschitz}
\end{equation}
in which case Eq.~\eqref{eq:general_output_bound} holds with $L_*$ at every training
step. If Eq.~\eqref{eq:app_uniform_decoder_lipschitz} is not enforced, the learned map
may amplify an arbitrarily small nonzero difference between measurement vectors, so QFL
alone provides no parameter-independent bound on the final Euclidean separation. It
still cannot recover information already removed by the measurement: if
$\mathcal{M}(\rho_i^\theta)=\mathcal{M}(\rho_j^\theta)$, then every shared deterministic
decoder satisfies
$\mathcal{D}_\phi(\rho_i^\theta)=\mathcal{D}_\phi(\rho_j^\theta)$.

\begin{proof}[Proof of Eq.~\eqref{eq:output_dispersion_bound}]
For each state $\rho_i^\theta$, define its decoded output as
$x_i:=\mathcal{D}(\rho_i^\theta)\in\mathbb{R}^{d_{\mathrm{out}}}$, where
$d_{\mathrm{out}}$ is the number of output coordinates. Let
$\bar{x}=M^{-1}\sum_{i=1}^{M}x_i$. We first derive the pairwise-variance identity rather
than invoking it without proof. Expanding the squared Euclidean distance gives
\begin{align}
 \sum_{i,j=1}^{M}\|x_i-x_j\|_2^2
 &=\sum_{i,j=1}^{M}
   \left(\|x_i\|_2^2+\|x_j\|_2^2-2x_i^\top x_j\right)\\
 &=2M\sum_{i=1}^{M}\|x_i\|_2^2
   -2\left\|\sum_{i=1}^{M}x_i\right\|_2^2.
 \label{eq:app_pairwise_expansion}
\end{align}
The second equality uses
$\sum_{i,j}\|x_i\|_2^2=M\sum_i\|x_i\|_2^2$ and
$\sum_{i,j}x_i^\top x_j=\|\sum_i x_i\|_2^2$. Since
$\sum_i x_i=M\bar{x}$, Eq.~\eqref{eq:app_pairwise_expansion} becomes
\begin{equation}
 \sum_{i,j=1}^{M}\|x_i-x_j\|_2^2
 =2M\sum_{i=1}^{M}\|x_i\|_2^2-2M^2\|\bar{x}\|_2^2.
 \label{eq:app_pairwise_expansion_mean}
\end{equation}
On the other hand, direct expansion around the mean gives
\begin{align}
 \sum_{i=1}^{M}\|x_i-\bar{x}\|_2^2
 &=\sum_{i=1}^{M}
   \left(\|x_i\|_2^2-2x_i^\top\bar{x}+\|\bar{x}\|_2^2\right)\\
 &=\sum_{i=1}^{M}\|x_i\|_2^2-M\|\bar{x}\|_2^2,
 \label{eq:app_centered_expansion}
\end{align}
where the last equality uses $\sum_i x_i=M\bar{x}$. Comparing
Eqs.~\eqref{eq:app_pairwise_expansion_mean} and~\eqref{eq:app_centered_expansion} yields
\begin{equation}
 \frac1M\sum_{i=1}^{M}\|x_i-\bar{x}\|_2^2
 =\frac{1}{2M^2}\sum_{i,j=1}^{M}\|x_i-x_j\|_2^2.
 \label{eq:app_pairwise_variance_identity}
\end{equation}

For the Euclidean output metric, Theorem~\ref{thm:readout_bound} states that
\begin{equation}
 \|x_i-x_j\|_2
 \leq L\sqrt{1-F(\rho_i,\rho_j)}.
 \label{eq:app_pairwise_output_bound}
\end{equation}
Both sides of Eq.~\eqref{eq:app_pairwise_output_bound} are nonnegative. Squaring therefore
preserves the inequality and gives
\begin{equation}
 \|x_i-x_j\|_2^2
 \leq L^2\left(1-F(\rho_i,\rho_j)\right).
 \label{eq:app_squared_pairwise_output_bound}
\end{equation}
Summing Eq.~\eqref{eq:app_squared_pairwise_output_bound} over all ordered pairs $(i,j)$,
dividing by $2M^2$, and using Eq.~\eqref{eq:app_pairwise_variance_identity}, we obtain
\begin{align}
 \frac1M\sum_{i=1}^{M}\|x_i-\bar{x}\|_2^2
 &=\frac{1}{2M^2}\sum_{i,j=1}^{M}\|x_i-x_j\|_2^2\\
 &\leq\frac{L^2}{2M^2}\sum_{i,j=1}^{M}
 \left(1-F(\rho_i,\rho_j)\right),
\end{align}
which proves Eq.~\eqref{eq:output_dispersion_bound}.
\end{proof}

\section{Proof of the Magnitude-Readout Bound}
\label{app:magnitude_bound}

\begin{proof}[Proof of Corollary~\ref{cor:magnitude_bound}]
Let $\{|k\rangle\}_{k=1}^{d}$ be an orthonormal computational basis of a
$d$-dimensional Hilbert space. For $i,j$ indexing two pure states, let
\begin{equation}
 |\psi_i\rangle=\sum_{k=1}^{d}c_{ik}|k\rangle,
 \qquad
 |\psi_j\rangle=\sum_{k=1}^{d}c_{jk}|k\rangle,
\end{equation}
where $c_{ik},c_{jk}\in\mathbb{C}$ are complex amplitudes, and let
$\rho_i:=|\psi_i\rangle\langle\psi_i|$ and
$\rho_j:=|\psi_j\rangle\langle\psi_j|$ be the corresponding density operators.
Define their nonnegative magnitude vectors
\begin{equation}
 a_i=(|c_{i1}|,\ldots,|c_{id}|)^\top,
 \qquad
 a_j=(|c_{j1}|,\ldots,|c_{jd}|)^\top.
\end{equation}
Because quantum states are normalized,
\begin{equation}
 \|a_i\|_2^2=\sum_k|c_{ik}|^2=1,
 \qquad
 \|a_j\|_2^2=\sum_k|c_{jk}|^2=1.
 \label{eq:app_magnitude_normalization}
\end{equation}
For pure states, the standard reduction of Uhlmann fidelity
\citep{nielsen2010quantum}, under our squared-fidelity convention, is
\begin{equation}
 \sqrt{F(\rho_i,\rho_j)}=|\langle\psi_i|\psi_j\rangle|.
 \label{eq:app_pure_fidelity_root}
\end{equation}
Here $(\cdot)^*$ denotes complex conjugation and $(\cdot)^\top$ denotes vector
transpose. Expanding the inner product and applying the triangle inequality for complex numbers,
$|\sum_k z_k|\leq\sum_k|z_k|$, gives
\begin{align}
 \sqrt{F(\rho_i,\rho_j)}
 &=\left|\sum_{k=1}^{d}c_{ik}^*c_{jk}\right|\\
 &\leq\sum_{k=1}^{d}|c_{ik}^*c_{jk}|\\
 &=\sum_{k=1}^{d}|c_{ik}||c_{jk}|\\
 &=a_i^\top a_j.
 \label{eq:app_fidelity_magnitude_inner_product}
\end{align}
The squared Euclidean distance between the two magnitude vectors is
\begin{align}
 \|a_i-a_j\|_2^2
 &=\|a_i\|_2^2+\|a_j\|_2^2-2a_i^\top a_j\\
 &=2-2a_i^\top a_j,
 \label{eq:app_magnitude_distance_expansion}
\end{align}
where Eq.~\eqref{eq:app_magnitude_normalization} is used in the second line. From
Eq.~\eqref{eq:app_fidelity_magnitude_inner_product},
$a_i^\top a_j\geq\sqrt{F(\rho_i,\rho_j)}$; multiplying by $-2$ reverses the inequality:
\begin{equation}
 -2a_i^\top a_j\leq-2\sqrt{F(\rho_i,\rho_j)}.
\end{equation}
Adding $2$ to both sides and using Eq.~\eqref{eq:app_magnitude_distance_expansion} gives
\begin{equation}
 \|a_i-a_j\|_2^2
 \leq2\left(1-\sqrt{F(\rho_i,\rho_j)}\right),
\end{equation}
which proves Eq.~\eqref{eq:magnitude_distance_bound}. The derivation uses amplitude
magnitudes only and therefore does not require $c_{ik}$ or $c_{jk}$ to be real.
\end{proof}

\paragraph{Connection to the implemented readout.}
In our implementation, the circuit ansatz consists only of $R_Y$ and CNOT gates.
Because both gates have real-valued matrix representations in the computational basis
and the initial state $|0\rangle^{\otimes N}$ is real, the ideal output statevector has
real-valued amplitudes. Therefore, the practical readout
$|\operatorname{Re}(c_{ik})|$ coincides with the amplitude magnitude $|c_{ik}|$,
up to floating-point error, and the bound in
Eq.~\eqref{eq:magnitude_distance_bound} applies directly. For an extension containing
complex-valued gates, the corresponding implementation should instead use
$|c_{ik}|=\sqrt{p_{ik}}$, where $p_{ik}=|c_{ik}|^2$ is the computational-basis
measurement probability.

\paragraph{Effect of the BasicQGAN normalization.}
For a vector $a=(a_k)_{k=1}^{d}$, define
$\|a\|_\infty:=\max_{1\leq k\leq d}|a_k|$, and let
$\mathbf{1}\in\mathbb{R}^{d}$ denote the all-ones vector. Ignoring the numerical
stabilizer $\epsilon$ for the displayed algebraic identity, the decoder
is
\begin{equation}
 \widetilde{x}_i=2\frac{a_i}{\|a_i\|_\infty}-\mathbf{1}.
\end{equation}
Therefore,
\begin{equation}
 \widetilde{x}_i+\mathbf{1}=2\frac{a_i}{\|a_i\|_\infty}.
 \label{eq:app_shifted_decoder}
\end{equation}
Taking the Euclidean norm of Eq.~\eqref{eq:app_shifted_decoder} and using
$\|a_i\|_2=1$ gives
\begin{equation}
 \|\widetilde{x}_i+\mathbf{1}\|_2
 =\frac{2\|a_i\|_2}{\|a_i\|_\infty}
 =\frac{2}{\|a_i\|_\infty}.
\end{equation}
Dividing Eq.~\eqref{eq:app_shifted_decoder} by this norm yields
\begin{equation}
 \frac{\widetilde{x}_i+\mathbf{1}}
      {\|\widetilde{x}_i+\mathbf{1}\|_2}=a_i.
\end{equation}
Thus, sample-wise maximum normalization changes the radial scale but preserves the
normalized magnitude direction. The implementation includes a fixed numerical
stabilizer $\epsilon>0$ in both normalization steps. Writing
$s_i:=\|a_i\|_1$ and $u_i:=a_i/(s_i+\epsilon)$, the implemented output is
\begin{equation}
 T_\epsilon(a_i)+\mathbf{1}
 =2\frac{u_i}{\|u_i\|_\infty+\epsilon}
 =\frac{2a_i}{\|a_i\|_\infty+\epsilon(s_i+\epsilon)}.
 \label{eq:app_implemented_decoder_epsilon}
\end{equation}
The right-hand side is a positive sample-dependent scalar multiple of $a_i$; hence
$(T_\epsilon(a_i)+\mathbf{1})/
\|T_\epsilon(a_i)+\mathbf{1}\|_2=a_i$ still holds because $\|a_i\|_2=1$.
Moreover, $\|a_i\|_\infty\geq1/\sqrt d$ on the normalized amplitude domain, so the
denominator in Eq.~\eqref{eq:app_implemented_decoder_epsilon} is bounded away from zero.
Because the absolute-value, $\ell_1$-norm, maximum, and quotient operations are
Lipschitz on this restricted domain, $T_\epsilon$ is also Lipschitz there. The
stabilizer changes the corresponding constant but introduces no trainable mapping.

The preceding identity establishes recoverability of $a_i$, but recoverability should
not be confused with preservation of Euclidean distance. We next quantify the metric
distortion for the idealized decoder without $\epsilon$. Let
$T(a):=2a/\|a\|_\infty-\mathbf{1}$ and consider arbitrary
$a,b\in\mathbb{R}_+^d$ satisfying $\|a\|_2=\|b\|_2=1$. Write
$\alpha:=\|a\|_\infty$ and $\beta:=\|b\|_\infty$. Since a vector's largest coordinate
is at least its root-mean-square coordinate,
\begin{equation}
 \alpha\geq\frac{\|a\|_2}{\sqrt d}=\frac1{\sqrt d},
 \qquad
 \beta\geq\frac{\|b\|_2}{\sqrt d}=\frac1{\sqrt d}.
 \label{eq:app_infinity_lower_bound}
\end{equation}
Using the triangle inequality, $\|b\|_2=1$, and the reverse triangle inequality
$|\alpha-\beta|\leq\|a-b\|_\infty\leq\|a-b\|_2$, we obtain
\begin{align}
 \|T(a)-T(b)\|_2
 &=2\left\|\frac{a}{\alpha}-\frac{b}{\beta}\right\|_2\\
 &\leq 2\left(
     \frac{\|a-b\|_2}{\alpha}
     +\|b\|_2\left|\frac1\alpha-\frac1\beta\right|
   \right)\\
 &=2\left(
     \frac{\|a-b\|_2}{\alpha}
     +\frac{|\alpha-\beta|}{\alpha\beta}
   \right)\\
 &\leq 2(\sqrt d+d)\|a-b\|_2.
 \label{eq:app_decoder_upper_lipschitz}
\end{align}
This proves the upper inequality in Eq.~\eqref{eq:decoder_metric_distortion}.

For the reverse direction, set $u:=T(a)+\mathbf{1}=2a/\alpha$ and
$v:=T(b)+\mathbf{1}=2b/\beta$. The standard normalization inequality
\begin{equation}
 \left\|\frac{u}{\|u\|_2}-\frac{v}{\|v\|_2}\right\|_2
 \leq \frac{2\|u-v\|_2}{\min\{\|u\|_2,\|v\|_2\}}
 \label{eq:app_normalization_inequality}
\end{equation}
follows by adding and subtracting $v/\|u\|_2$, applying the triangle inequality,
and using $|\|u\|_2-\|v\|_2|\leq\|u-v\|_2$. Here
$\|u\|_2=2/\alpha\geq2$ and $\|v\|_2=2/\beta\geq2$. Moreover,
$u/\|u\|_2=a$ and $v/\|v\|_2=b$. Equation~\eqref{eq:app_normalization_inequality}
therefore gives
\begin{equation}
 \|a-b\|_2\leq\|u-v\|_2=\|T(a)-T(b)\|_2,
 \label{eq:app_decoder_lower_lipschitz}
\end{equation}
which proves the lower inequality in Eq.~\eqref{eq:decoder_metric_distortion}.
Finally, substituting Eq.~\eqref{eq:magnitude_distance_bound} into
Eq.~\eqref{eq:app_decoder_upper_lipschitz} gives
Eq.~\eqref{eq:decoded_fidelity_bound}. Hence maximum normalization does not preserve
Euclidean distance exactly, but it is a fixed bi-Lipschitz transformation on the
nonnegative unit sphere. The numerical stabilizer changes the constants slightly while
preserving the same sample direction; no trainable transformation is introduced.

\section{Consistency of Finite-Ensemble QFL Statistics}
\label{app:qfl_stabilization}

We use the strong law of large numbers for U-statistics. Let
$X_1,X_2,\ldots$ be independent and identically distributed (i.i.d.) random elements,
let $r\geq1$ be the order of the statistic, and let
$h(X_1,\ldots,X_r)$ be a symmetric integrable kernel, where integrability means
$\mathbb{E}[|h(X_1,\ldots,X_r)|]<\infty$. Then the associated U-statistic
converges almost surely to $\mathbb{E}[h(X_1,\ldots,X_r)]$, where
$\mathbb{E}$ denotes expectation \citep[Chapter~5]{serfling1980approximation}.

Let $\rho_1,\ldots,\rho_M$ be i.i.d. random states drawn from a fixed population
distribution over density operators, and define the pairwise fidelity
$G_{ij}:=F(\rho_i,\rho_j)\in[0,1]$. Although the collection
$\{G_{ij}:i<j\}$ is dependent, averages over all unordered pairs have the U-statistic
structure needed below.

\begin{theorem}[Consistency of finite-ensemble QFL statistics]
\label{thm:app_qfl_statistic_consistency}
For any bounded measurable function $\varphi:[0,1]\to\mathbb{R}$, define
\begin{equation}
 U_M(\varphi):=\frac{2}{M(M-1)}
 \sum_{1\leq i<j\leq M}\varphi(G_{ij}).
 \label{eq:app_general_qfl_ustat}
\end{equation}
Then
\begin{equation}
 U_M(\varphi)\xrightarrow{\mathrm{a.s.}}
 \mathbb{E}[\varphi(F(\rho_1,\rho_2))]
 \qquad\text{as }M\to\infty.
 \label{eq:app_general_qfl_convergence}
\end{equation}
Consequently, as $M$ increases, the empirical QFL histogram converges almost surely to
the corresponding population histogram when a common finite bin partition is used
across ensemble sizes. Separately, for a fixed bounded KDE kernel, fixed
bandwidth, and fixed finite evaluation grid, the KDE values used in the calibration
loss are also consistent. The empirical mean, variance, and composite loss in
Eq.~\eqref{eq:qfl_loss} therefore converge almost surely to their corresponding
population quantities for every fixed prior parameter $\alpha$.
\end{theorem}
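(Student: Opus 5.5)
The plan is to reduce everything to the strong law of large numbers for U-statistics recalled just before the theorem, and then propagate almost-sure convergence through finitely many continuous operations.

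\emph{Core convergence.} Fix a bounded measurable $\varphi$ and set $h(\rho,\sigma):=\varphi(F(\rho,\sigma))$.
\begin{itemize}
\item This kernel is symmetric, because $F(\rho,\sigma)=F(\sigma,\rho)$. For pure states this is immediate from $|\langle\psi|\phi\rangle|^2$; in general it follows from the symmetry of Uhlmann fidelity.
\item The map $(\rho,\sigma)\mapsto F(\rho,\sigma)$ is continuous on density operators, and hence Borel. So $h$ is measurable in the pair.
\item Since $|h|\le\|\varphi\|_\infty$, the kernel is integrable.
\item With $G_{ij}=F(\rho_i,\rho_j)$, we have $U_M(\varphi)$ equal to the order-two U-statistic with kernel $h$ built from the i.i.d. sample $\rho_1,\dots,\rho_M$.
\end{itemize}
Serfling's theorem then gives $U_M(\varphi)\to\mathbb{E}[\varphi(F(\rho_1,\rho_2))]$ almost surely. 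The dependence among the $G_{ij}$ plays no role, because the U-statistic law already accounts for it.

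\emph{Histogram convergence.} Take a fixed finite partition $\{B_1,\dots,B_J\}$ of $[0,1]$. Applying the core result with $\varphi=\mathbf 1_{B_b}$ for each $b$ gives almost-sure convergence of each bin frequency. Intersecting finitely many probability-one events gives joint convergence of the whole histogram.

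\emph{KDE convergence.} Use a bounded measurable kernel $\kappa$, a fixed bandwidth $\tau$, and fixed grid points $q_1,\dots,q_B$. The KDE value of the empirical QFL at $q_b$ is
\[
\widehat f(q_b)=\frac{2}{M(M-1)}\sum_{i<j}\tau^{-1}\kappa\!\left(\frac{q_b-G_{ij}}{\tau}\right),
\]
which is $U_M(\varphi_b)$ for the bounded function $\varphi_b(t)=\tau^{-1}\kappa((q_b-t)/\tau)$. So each grid value converges almost surely, and again there are only finitely many grid points.

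\emph{Mean and variance.} Taking $\varphi(t)=t$ and $\varphi(t)=t^2$ gives $\widehat m\to\mathbb{E}[F]$ and $\widehat{m_2}\to\mathbb{E}[F^2]$. The empirical variance is $\widehat{m_2}-\widehat m^{\,2}$, possibly with a deterministic correction factor that tends to $1$. It therefore converges by continuity of $(x,y)\mapsto y-x^2$.

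\emph{Composite loss.} Two cases arise.
\begin{itemize}
\item If $\widehat\mu_{\mathrm{data}}$ is held fixed, the loss $\mathcal L_{\mathrm{QFL}}$ is a continuous function of finitely many prior statistics: the $B$ KDE values, $\widehat m$, and $\widehat v$. The function is built from absolute values, differences, and an average. The continuous mapping theorem, applied pathwise on the intersection of the probability-one events, gives almost-sure convergence for each fixed $\alpha$.
\item If the data statistics are also indexed by a growing i.i.d. data sample, the same argument applies to them. Intersecting the prior and data probability-one events handles both at once.
\end{itemize}

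\emph{Main obstacle.} The only real difficulty is justifying the hypotheses of the U-statistic SLLN in this setting. This means confirming that the states form i.i.d. random elements of a standard Borel space, namely density matrices as a subset of $\mathbb C^{d\times d}$, and that $F$ is jointly measurable. For the paper's prior this is clean: $\rho_z$ is a continuous image of the i.i.d. uniform $u^{(m)}$, so the states are i.i.d. in the required sense. Joint measurability of $F$ follows from its continuity, which I would cite from \citet{nielsen2010quantum} rather than reprove. A secondary point is the convention for variance normalization, $M$ versus $M-1$ or pairs versus pairs minus one. Any such convention differs by a deterministic factor tending to $1$, so it does not affect the limit.
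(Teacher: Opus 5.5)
Your proposal is correct and follows the paper's proof essentially step for step. The paper also uses the symmetric bounded kernel $\varphi(F(\rho,\sigma))$ and the strong law for order-two U-statistics, specializing to bin indicators, the rescaled KDE kernel on a finite grid, and $t$, $t^2$ for the moments, then concludes for the composite loss by continuity. Your remarks on joint measurability of $F$, the i.i.d. structure of the prepared states, and variance-normalization conventions are sensible refinements that the paper leaves implicit.
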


\begin{proof}
Define the two-argument kernel
\begin{equation}
 h_\varphi(\rho,\sigma)
 :=\varphi(F(\rho,\sigma)).
 \label{eq:app_qfl_test_kernel}
\end{equation}
Quantum fidelity is symmetric, so $h_\varphi(\rho,\sigma)=h_\varphi(\sigma,\rho)$.
Because $\varphi$ is bounded, there exists $C_\varphi<\infty$ such that
$|h_\varphi(\rho,\sigma)|\leq C_\varphi$, and hence
$\mathbb{E}|h_\varphi(\rho_1,\rho_2)|<\infty$. Equation~\eqref{eq:app_general_qfl_ustat}
is therefore an order-two U-statistic with a symmetric integrable kernel. Applying the
strong law for U-statistics stated above proves
Eq.~\eqref{eq:app_general_qfl_convergence}.

We now derive the statistics used in the paper as special cases. First, for a measurable
histogram bin $A\subset[0,1]$, choose
$\varphi_A(t):=\mathbf{1}\{t\in A\}$, where $\mathbf{1}\{E\}$ is one if event $E$
occurs and zero otherwise. Then $U_M(\varphi_A)$ is the empirical QFL mass in $A$, and
Eq.~\eqref{eq:app_general_qfl_convergence} gives
\begin{equation}
 U_M(\varphi_A)\xrightarrow{\mathrm{a.s.}}
 \Pr(F(\rho_1,\rho_2)\in A).
 \label{eq:app_histogram_bin_convergence}
\end{equation}
For a fixed finite histogram partition, this convergence holds simultaneously for all
bins because there are only finitely many of them. Hence the vector of empirical QFL bin
masses converges almost surely to the corresponding population bin-mass vector, which
formalizes convergence of the empirical QFL histogram to the corresponding population
histogram under a common bin partition.

Second, let $K:\mathbb{R}\to\mathbb{R}$ be the bounded kernel used by the KDE, let
$h>0$ be its fixed bandwidth, and define, for a fixed grid point $q_b$,
\begin{equation}
 \varphi_{q_b,h}(t):=\frac{1}{h}K\!\left(\frac{q_b-t}{h}\right).
 \label{eq:app_kde_test_function}
\end{equation}
The empirical KDE at $q_b$ is exactly
\begin{equation}
 \widehat f_M(q_b)=U_M(\varphi_{q_b,h}).
 \label{eq:app_kde_as_ustat}
\end{equation}
Since $K$ is bounded and $h$ is fixed, $\varphi_{q_b,h}$ is bounded. Therefore,
\begin{equation}
 \widehat f_M(q_b)\xrightarrow{\mathrm{a.s.}}
 \mathbb{E}\!\left[
 \frac1h K\!\left(\frac{q_b-F(\rho_1,\rho_2)}{h}\right)
 \right].
 \label{eq:app_kde_grid_convergence}
\end{equation}
Because the grid $\{q_b\}_{b=1}^{B}$ is finite, the convergence is simultaneous for all
$B$ grid values. Thus the KDE values used by the finite-grid calibration objective are
consistent estimators of their population counterparts.

Third, choosing $\varphi_1(t)=t$ and $\varphi_2(t)=t^2$ is valid because
$t,t^2\in[0,1]$. Thus the empirical first and second moments satisfy
\begin{align}
 \widehat m_M:=U_M(\varphi_1)
 &\xrightarrow{\mathrm{a.s.}}m:=\mathbb{E}[F(\rho_1,\rho_2)],\\
 \widehat s_M:=U_M(\varphi_2)
 &\xrightarrow{\mathrm{a.s.}}s:=\mathbb{E}[F(\rho_1,\rho_2)^2].
 \label{eq:app_moment_convergence}
\end{align}
For the empirical variance convention used in Eq.~\eqref{eq:qfl_loss},
$\widehat v_M=\widehat s_M-\widehat m_M^2$. Continuity of subtraction and squaring then
gives
\begin{equation}
 \widehat v_M\xrightarrow{\mathrm{a.s.}}s-m^2
 =\operatorname{Var}(F(\rho_1,\rho_2)).
 \label{eq:app_variance_convergence}
\end{equation}

Apply these results separately to the prior and data ensembles. Finite sums, absolute
values, addition, and subtraction are continuous operations. Hence the discretized KDE
discrepancy in Eq.~\eqref{eq:kde_discrepancy}, the two moment discrepancies, and their
sum $\mathcal{L}_{\mathrm{QFL}}$ converge almost surely to the same expressions formed
from the corresponding population smoothed KDE values and moments.
\end{proof}

\paragraph{Implication for QFL stability}
Consequently, as $M$ increases, the empirical QFL histogram converges to the
corresponding population histogram when a common bin partition is used across ensemble
sizes. This is the stability property illustrated by the finite-ensemble comparisons in
Figure~\ref{statistic}.

% ---- END INLINED THEORY PROOFS ----

\section{Additional Quantum Fidelity Landscape (QFL) Stability Validation}
\label{app_2}
% To further demonstrate the stability of QFL, we extended our analysis beyond the digit '0' presented in the main text. Fig.\ref{crop_Appendix_Static_Num1} and Fig.\ref{crop_Appendix_Static_Num2} 
% provide visualizations of the QFLs derived from the digit '1' class and the digit '2' class of the 16x16 MNIST dataset, respectively.

To further demonstrate the stability of QFL, we extend the analysis beyond digit~0 presented in the main text. Figure~\ref{crop_Appendix_Static_Num1} and Figure~\ref{crop_Appendix_Static_Num2} visualize the QFLs for digits~1 and~2 of the $16\times16$ MNIST dataset, respectively.

\begin{figure}[H]
    \centering
   \includegraphics[width=1\textwidth, trim=0cm 0cm 0cm 0cm, clip]{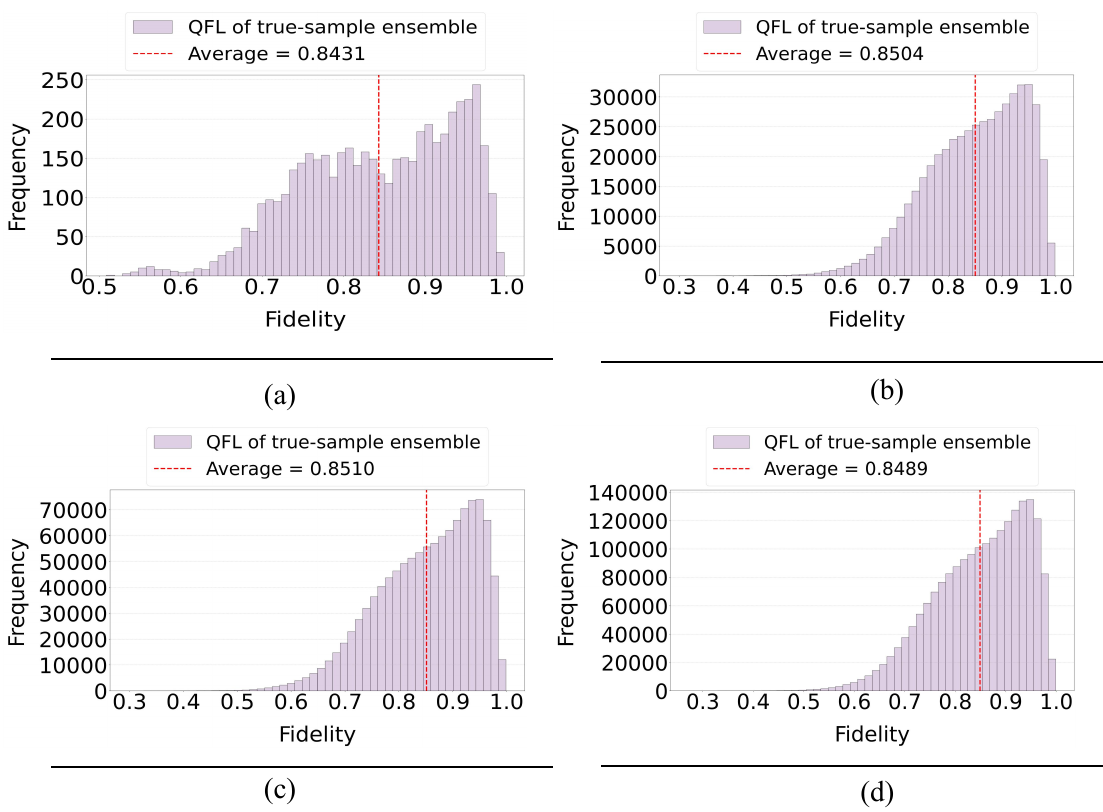}
%     \caption{Calculating the QFL for different numbers of true-sample states for the digit '1' class of the 16x16 MNIST dataset. (a)-(d) correspond to sample sizes of 500,
% 1000, 1500, and 2000, respectively.}
    \caption{QFLs computed from different numbers of real samples for digit~1 in the $16\times16$ MNIST dataset. Subfigures (a)--(d) use sample sizes of 500, 1000, 1500, and 2000, respectively.}
    \label{crop_Appendix_Static_Num1}
\end{figure}

\begin{figure}[H]
    \centering
   \includegraphics[width=1\textwidth, trim=0cm 0cm 0cm 0cm, clip]{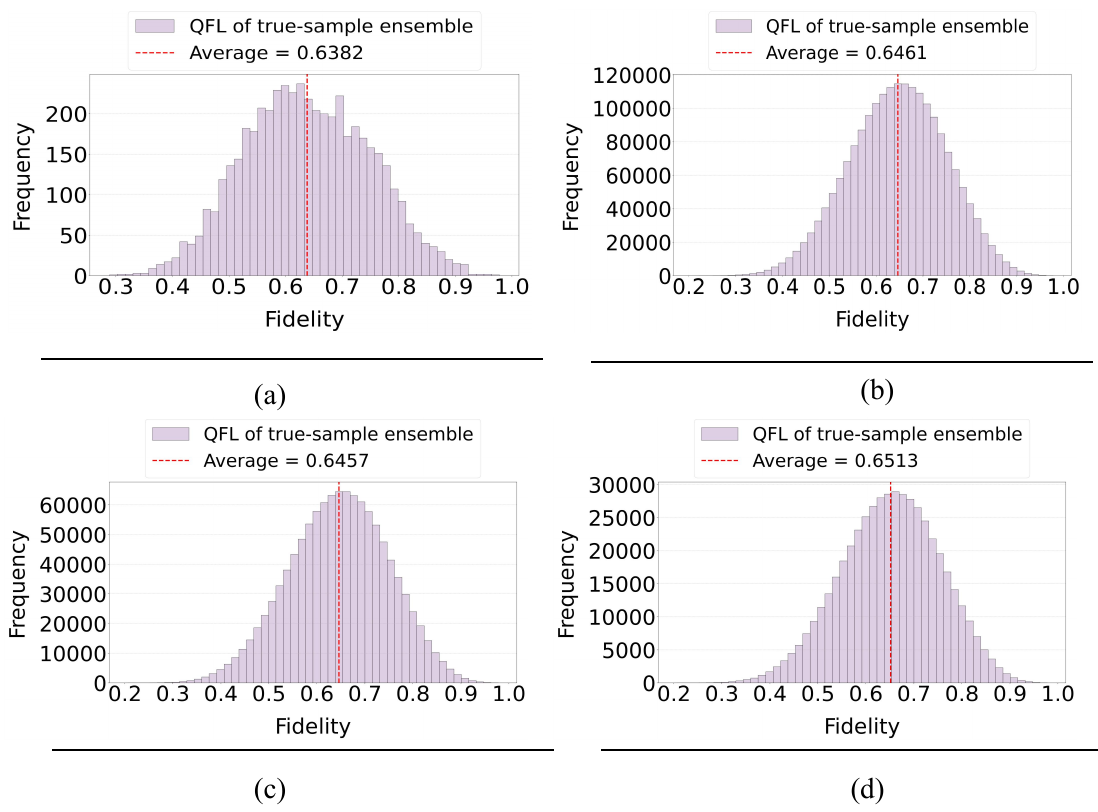}
    % \caption{Calculating the QFL for different numbers of true-sample states for the digit '2' class of the 16x16 MNIST dataset. (a)-(d) correspond to sample sizes of 500, 1000, 1500, and 2000, respectively.}
        \caption{QFLs computed from different numbers of real samples for digit~2 in the $16\times16$ MNIST dataset. Subfigures (a)--(d) use sample sizes of 500, 1000, 1500, and 2000, respectively.}
    \label{crop_Appendix_Static_Num2}
\end{figure}

\section{Achieving QFL Alignment on Diverse Datasets using the Quantum Fidelity Landscape (QFL) optimization algorithm}
\label{app_3}
This appendix provides additional QFL-alignment results achieved by our initial-state ensemble preparation algorithm. As highlighted in the main paper, accurate alignment between the initial-state ensemble and the true-sample ensemble is crucial for BasicQGAN. Figures~\ref{crop_Appendix_QFL_size16_0to9} and ~\ref{crop_Appendix_QFL_size28_0to9} present the QFL-alignment results for each digit class (0--9) on the $16\times16$ and $28\times28$ MNIST datasets, respectively.

\begin{figure}[H]
    \centering
    \includegraphics[width=0.8\textwidth, trim=0cm 0cm 0cm 0cm, clip]{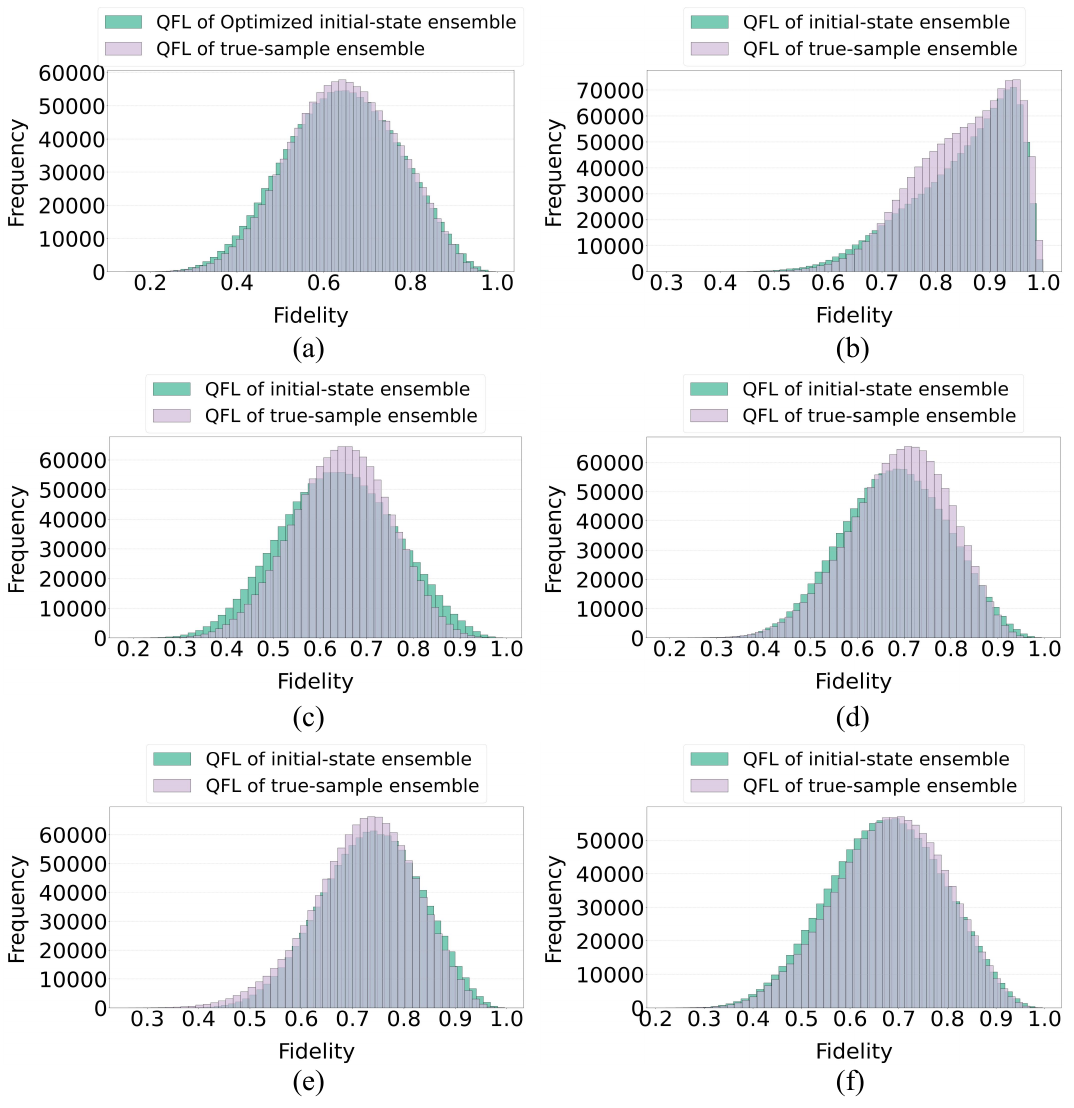}
    \includegraphics[width=0.8\textwidth, trim=0cm 0cm 0cm 0cm, clip]{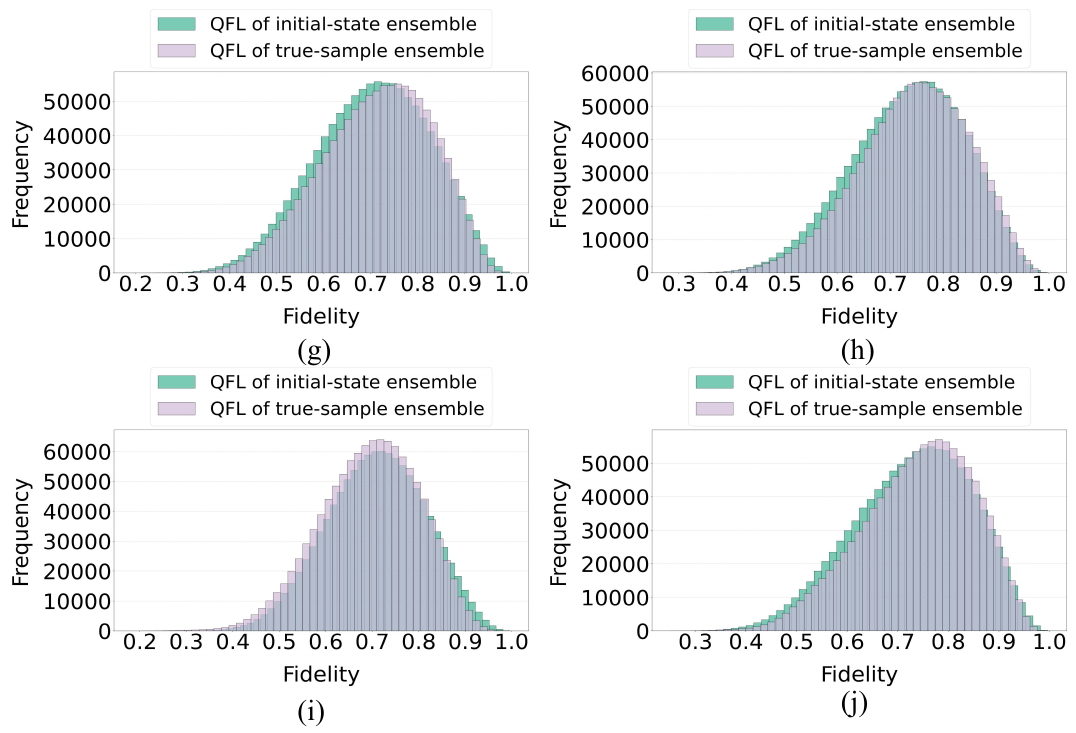}
    \caption{QFL alignment on $16\times16$ MNIST (digits 0--9). Subfigures (a)--(j) show, for each digit, the QFL of the optimized initial-state ensemble and the true-sample ensemble.}
    \label{crop_Appendix_QFL_size16_0to9}
\end{figure}
\clearpage

\begin{figure}[p]
    \centering
    \includegraphics[width=0.8\textwidth, trim=0cm 0cm 0cm 0cm, clip]{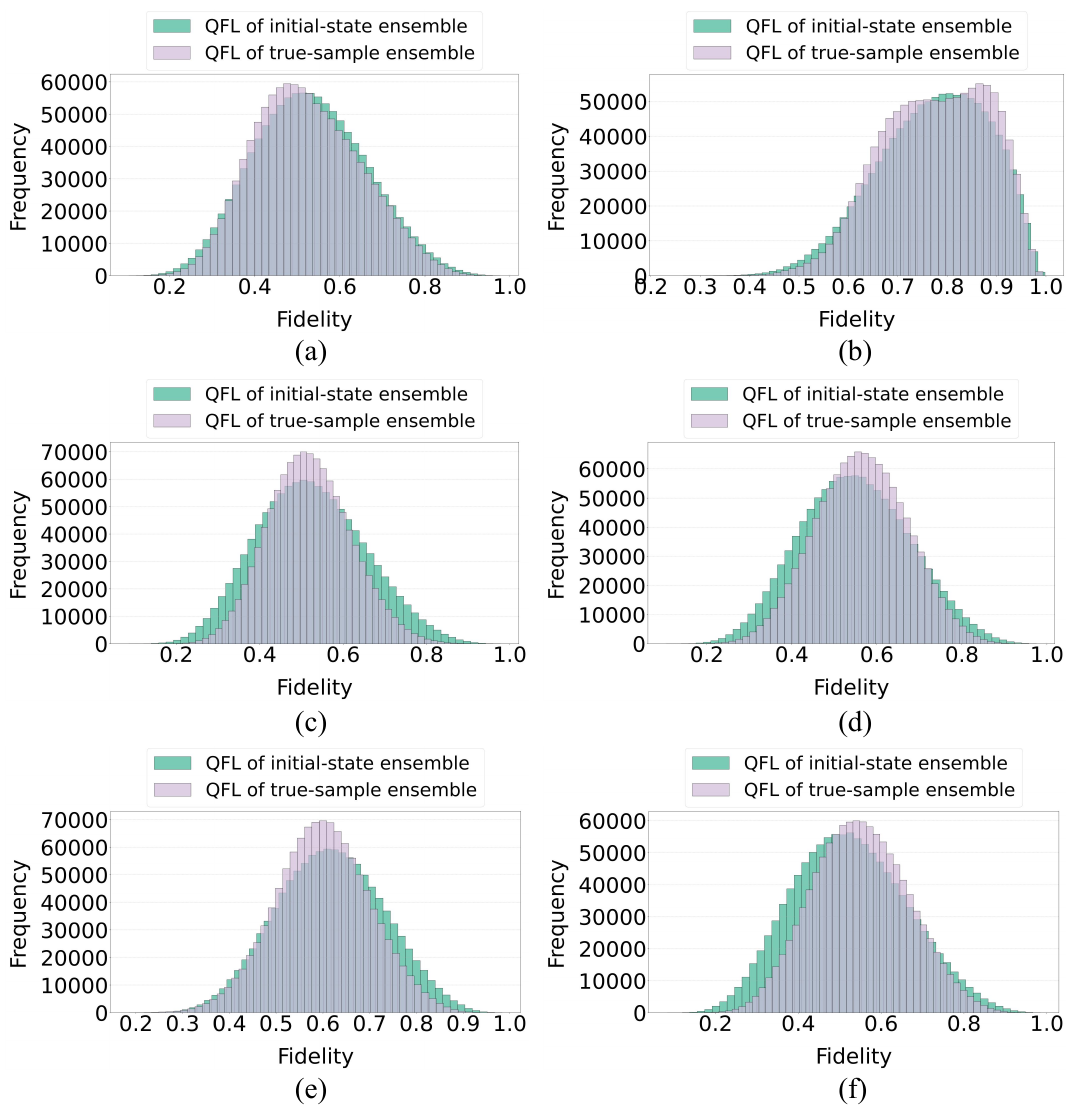}
    \includegraphics[width=0.8\textwidth, trim=0cm 0cm 0cm 0cm, clip]{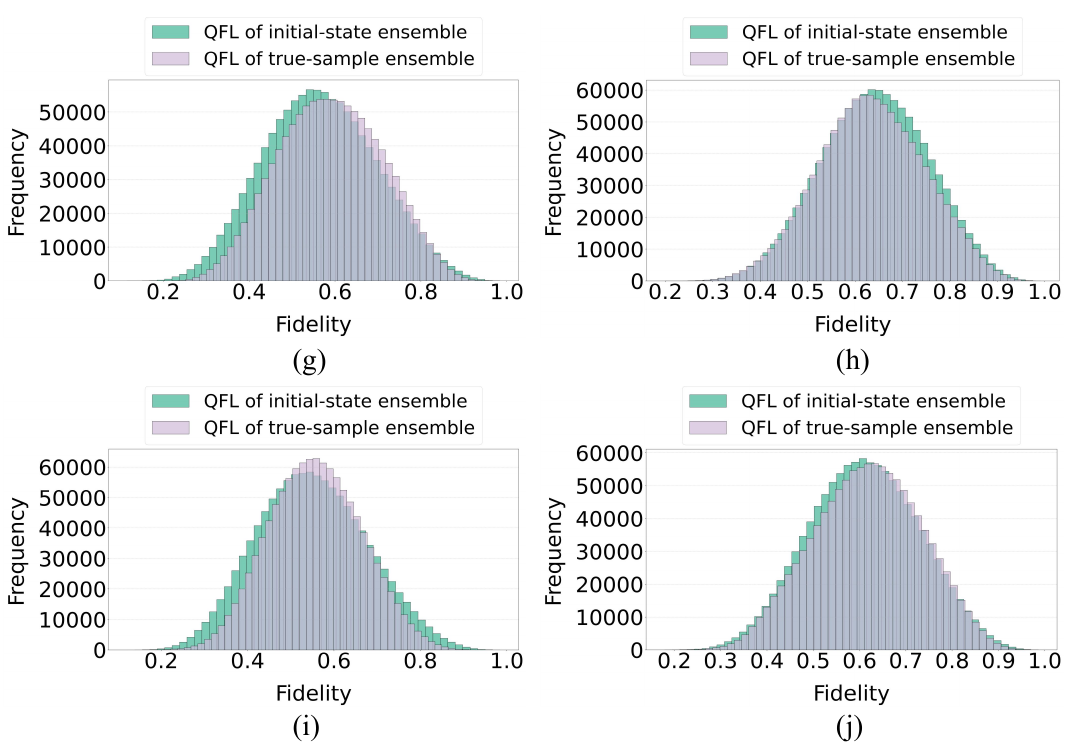}
    \caption{QFL alignment on $28\times28$ MNIST (digits 0--9). Subfigures (a)--(j) show, for each digit, the QFL of the optimized initial-state ensemble and the true-sample ensemble.}
    \label{crop_Appendix_QFL_size28_0to9}
\end{figure}

\clearpage
\section{The Objective Function and Training Algorithm of BasicQGAN}
\label{app_4}

\textbf{Objective Function} Here, we detail the specific objective functions utilized by the generator and discriminator within the BasicQGAN framework. Specifically, the discriminator aims to distinguish between real and generated data.
The corresponding loss function is given by: 
% \ref{Loss:D}.
% \begin{equation}
% L_D ={}& {E}_{x\sim p_{\text{data}}(x)} [D({x})] 
%       - {E}_{Z'\sim p(Z')}[D(G({Z'}))]
%      & - \lambda {E}_{\hat{{x}} \sim P_{\text{data}}(\hat{{x}})} \left[\left(\left\|\nabla_{\hat{{x}}} D(\hat{{x}})\right\|_2-1\right)^2\right]
% \label{Loss:D}
% \end{equation}

\begin{equation}
\label{Loss:D}
L_D
= \mathbb{E}_{x\sim p_{\text{data}}}\!\left[D(x)\right]
- \mathbb{E}_{z'\sim p(z')}\!\left[D\!\left(G(z')\right)\right]
- \lambda\, \mathbb{E}_{\hat x\sim p_{\hat x}}\!\left[\left(\left\|\nabla_{\hat x}D(\hat x)\right\|_2-1\right)^2\right].
\end{equation}

where $\hat{x}$ is uniformly interpolated between a real sample $x \sim p_{\text{data}}$ and a generated sample $\tilde{x} \sim p_G$, $p(Z')$ denotes the \emph{optimized distribution over the state-preparation rotation angles $Z'$}, and $\lambda$ is the gradient-penalty coefficient.
The quantum generator strives to create images so realistic that they can fool the discriminator.
% by minimizing the objective function.
The corresponding loss function is given by:
\begin{equation}
L_G = - {E}_{Z'\sim p(Z')}[D(G(Z'))]
\label{Loss:G}
\end{equation}

\textbf{Training Algorithm}
The training algorithm for BasicQGAN follows the WGAN-GP training algorithm, with the primary difference being the use of the quantum generator. We optimize the model using an alternating training strategy. Within a training epoch, the parameters of the generator $G$ are first fixed, and the discriminator $D$ is trained. 
Subsequently, the parameters of the discriminator $D$ are fixed, and the generator $G$ is trained.
These two steps are repeatedly alternated until a Nash equilibrium is reached. 
The complete training algorithm for BasicQGAN is detailed in Algorithm~\ref{alg:BasicQGAN algorithm}.

\begin{algorithm}[!h]
\caption{Pseudocode of training algorithm for BasicQGAN}
\label{alg:BasicQGAN algorithm}
\textbf{Input}: Gradient penalty coefficient $\lambda$, number of epochs $n_{\text{epoch}}$, batch size $m$, critic update every $n_c$ iterations, Adam momentum hyperparameters $b_1$, $b_2$, the learning rate hyperparameter of the generator $\eta_1$, the learning rate hyperparameter of the critic $\eta_2$, random number $\xi \sim U[0, 1]$, optimized rotation angle sampling distribution $Z'\sim P(Z')$. 
\begin{algorithmic}[1] %[1] enables line numbers
\STATE Initialize discriminator parameters $\omega$, generator parameters $\theta$.
% $\theta$ consists of parameters $\alpha$, $\beta$, $\delta$.
\STATE $batchnum:=0$
\FOR {$\text{epoch} = 1, \ldots, n_{\text{epoch}}$}
\FOR{$i = 1, \ldots, m$}
\STATE Sample real data $x \sim P_{\text{data}}(x)$, rotation angle $Z'\sim P(Z')$
\STATE $x' \stackrel{\text{measurement}}{\longleftarrow}\left| \psi \right\rangle \leftarrow G(\theta,Z')$
\STATE $\hat{x} \leftarrow \xi x + (1 - \xi) x^{\prime}$
\STATE $ {L_{D}}\leftarrow D({x^{\prime}}) - D(x) + \lambda \left( \left\| \nabla_{\hat{x}} D(\hat{x}) \right\|_2 - 1 \right)^2$ 
\STATE $\omega \leftarrow \text{Adam} \left( \frac{1}{m} \sum_{i=1}^{m} {L_{D}},\omega, \eta_2, b_1, b_2 \right)$ 
\STATE $batchnum:=batchnum+1$ 
\IF{$batchnum \bmod n_c == 0$} 
\STATE $x^{\prime} \stackrel{\text{measurement}}{\longleftarrow}\left| \psi\right\rangle \leftarrow G(\theta, Z')$ 
\STATE $ L_{G}\leftarrow D({x^{\prime}}) $
\STATE $\theta\leftarrow \text{Adam} \left( \frac{1}{m} \sum_{i=1}^{m} L_{G}, \theta, \eta_1, b_1, b_2 \right)$   \ENDIF
\ENDFOR    
\ENDFOR
\end{algorithmic}
\end{algorithm}

\newpage
\section{Dataset Descriptions}
\label{app_5}
Most existing baselines for QGAN image generation are benchmarked on small-scale grayscale datasets. Following this common practice, we construct four datasets from public repositories (MNIST, EMNIST, and HDS), as shown in Figure~\ref{crop_Appendix_Dataset_line} and summarized in Table~\ref{tab_dataset}.

\begin{figure}[H]
    \centering
   \includegraphics[width=\linewidth, trim=0cm 0cm 0cm 0cm, clip]{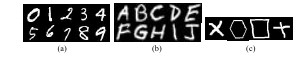}
    \caption{Representative samples from our used datasets: (a) MNIST, (b) Uppercase Letters, and (c) Geometric Shapes.}
    \label{crop_Appendix_Dataset_line}
\end{figure}

\begin{table}[H]
  \caption{A summary of the datasets.}
  \label{tab_dataset}
  \begin{center}
    \begin{small}
      \scshape
        \setlength{\tabcolsep}{1mm}
        \begin{tabular}{@{}cccc@{}}
          \toprule
          \makecell{Dataset} &
          \makecell{Image\\ Resolution} &
          \makecell{No. of\\ Training Instances} &
          \makecell{No. of\\ Classes} \\
          \midrule
          MNIST & $28\times 28$ & 1500 & 10 object categories \\
          MNIST & $16\times 16$ & 1500 & 10 object categories \\
          \makecell{Uppercase\\ Letters} & $16\times 16$ & 1500 & 10 object categories \\
          \makecell{Geometric\\ Shapes}  & $16\times 16$ & 400 & 4 object categories \\
          \bottomrule
        \end{tabular}
    \end{small}
  \end{center}
  \vskip -0.1in
\end{table}

\section{Additional Experimental Results and Training Details}
\label{app_6}
This appendix provides supplementary experimental results and further details regarding the training processes mentioned in the main paper. Figure~\ref{crop_16size0to9}, Figure~\ref{crop_28size0to9}, Figure~\ref{crop_16sizeGeometric2}, and Figure~\ref{crop_16sizeUppercase} showcase more generated image samples from BasicQGAN, PQWGAN, and WGAN-GP across various datasets.
In addition, Figure~\ref{crop_FID_loss}  reports the FID curves over training epochs on MNIST at two resolutions.

\begin{figure}[H]
    \centering
   \includegraphics[width=0.8\linewidth, trim=0cm 0cm 0cm 0cm, clip]{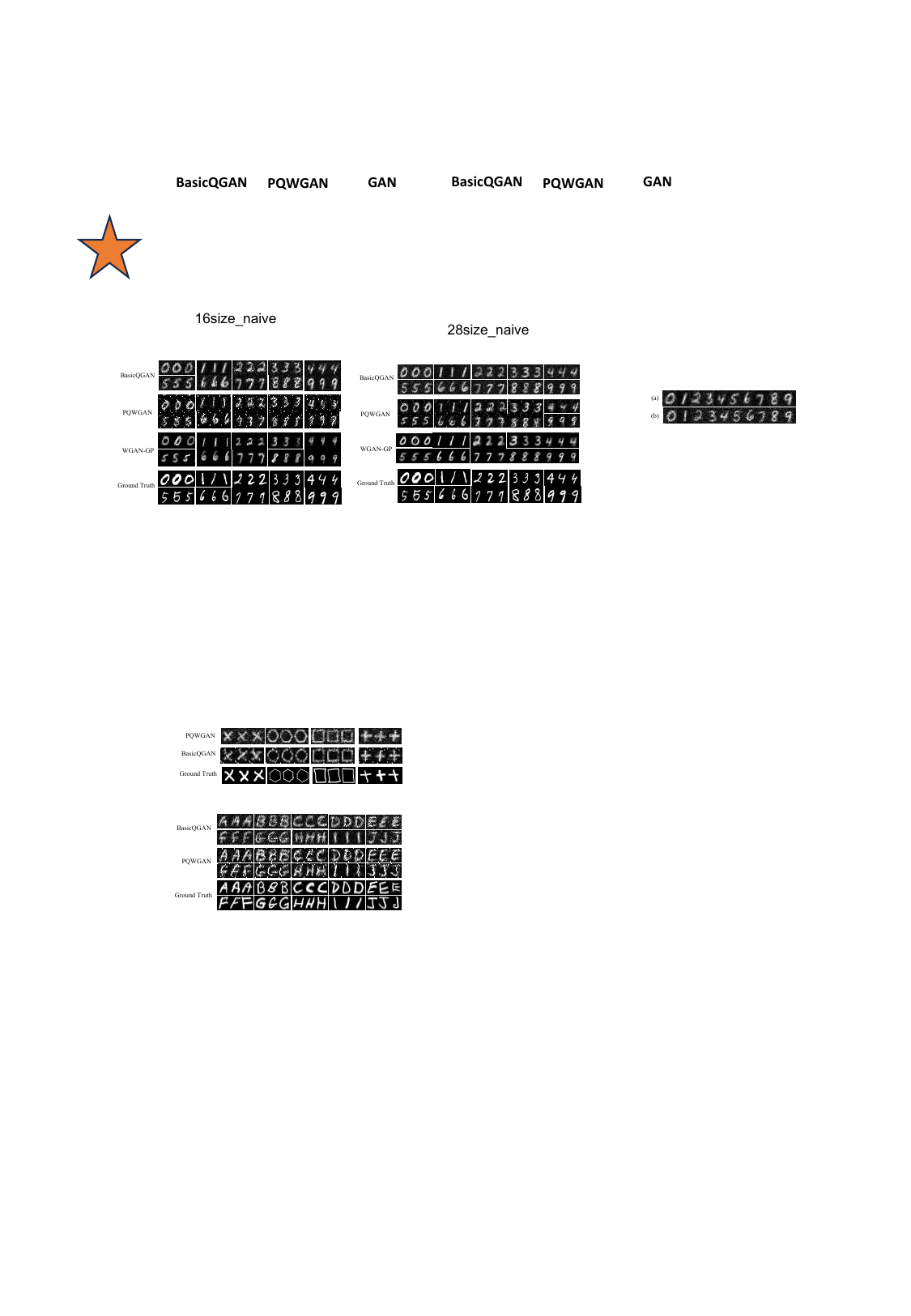}
    \caption{The generated samples for all digit classes (0-9) produced by various models on the 16$\times$16 MNIST dataset.}
    \label{crop_16size0to9}
\end{figure}

\begin{figure}[H]
    \centering
   \includegraphics[width=0.8\linewidth, trim=0cm 0cm 0cm 0cm, clip]{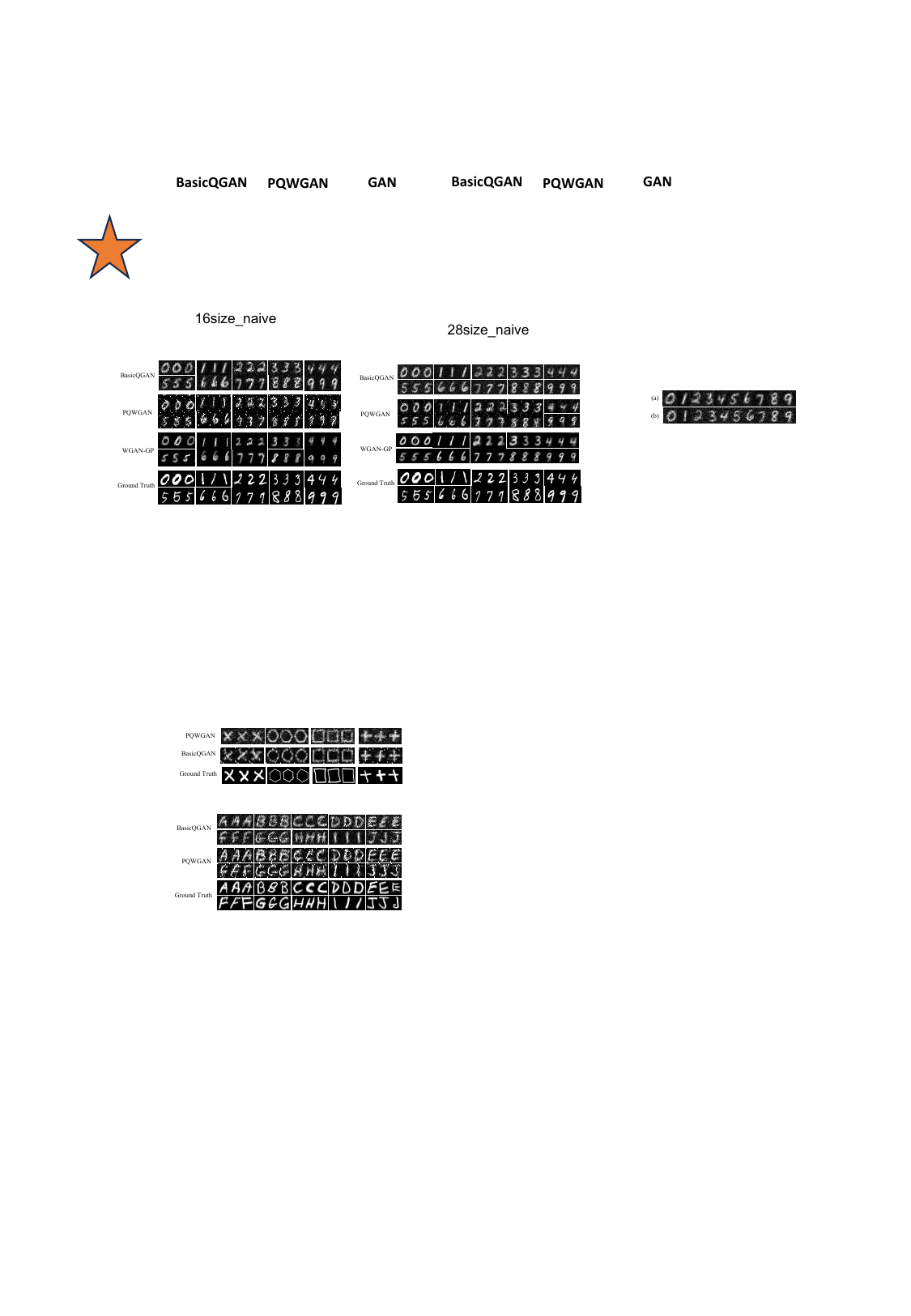}
    % \caption{Representative sample images from the datasets used in our experiments: (a) MNIST digits, (b) Uppercase Letters, and (c) Geometric Shapes.}
    \caption{The generated samples for all digit classes (0-9) produced by various models on the 28$\times$28 MNIST dataset.}
    \label{crop_28size0to9}
\end{figure}

\begin{figure}[H]
    \centering
   \includegraphics[width=0.9\linewidth, trim=0cm 0cm 0cm 0cm, clip]{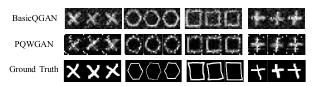}
    \caption{The generated samples for all shape classes (angleCross,hexagon,  square and straightCross) produced by various models on the 16$\times$16 Geometric Shapes dataset.}
    \label{crop_16sizeGeometric2}
\end{figure}

\begin{figure}[H]
    \centering
   \includegraphics[width=0.9\linewidth, trim=0cm 0cm 0cm 0cm, clip]{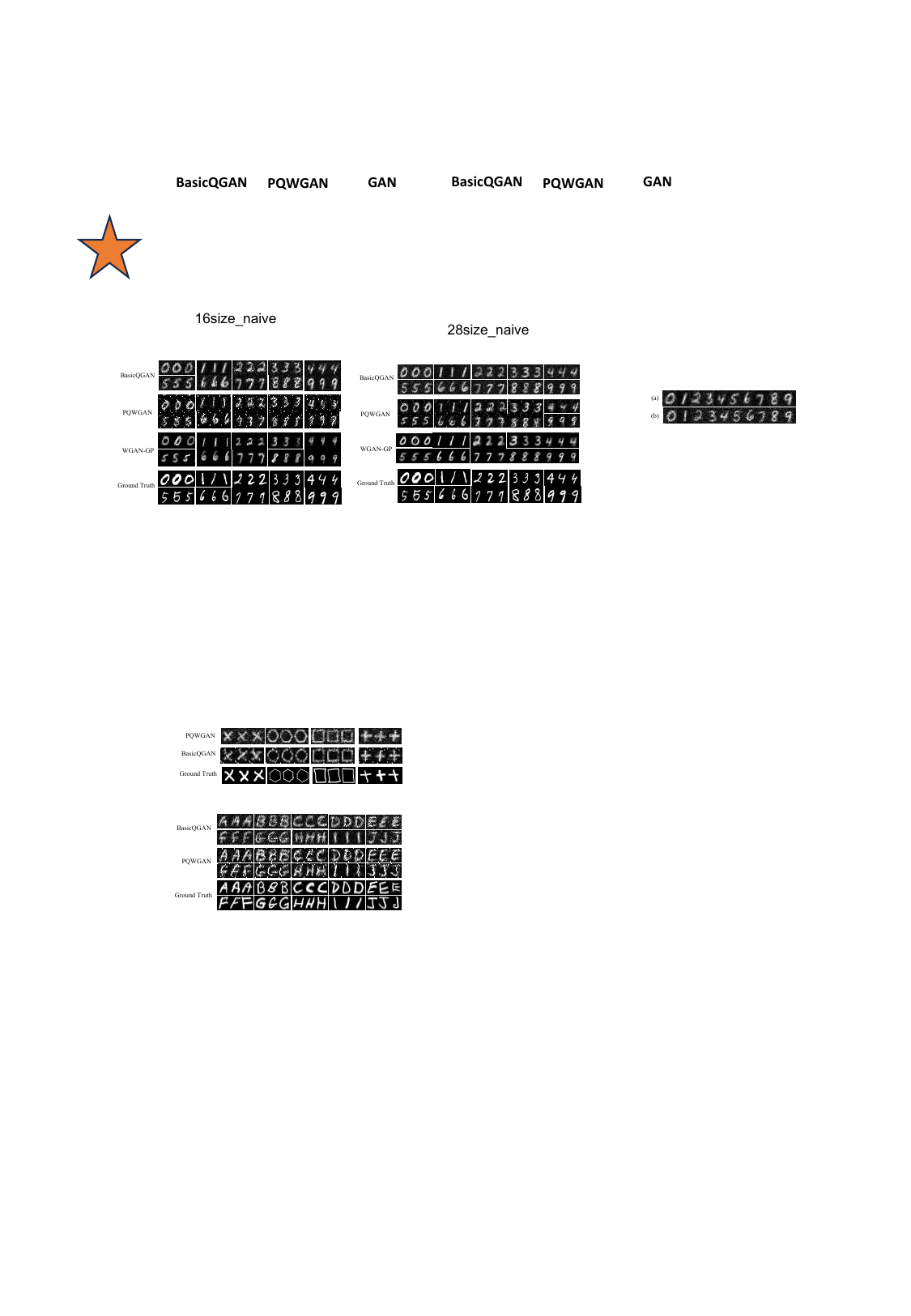}
    \caption{The generated samples (A-J) for all letter classes produced by various models on the 16$\times$16 Uppercase Letters dataset.}
    \label{crop_16sizeUppercase}
\end{figure}

\begin{figure}[H]
    \centering
   \includegraphics[width=0.8\linewidth, trim=0cm 0cm 0cm 0cm, clip]{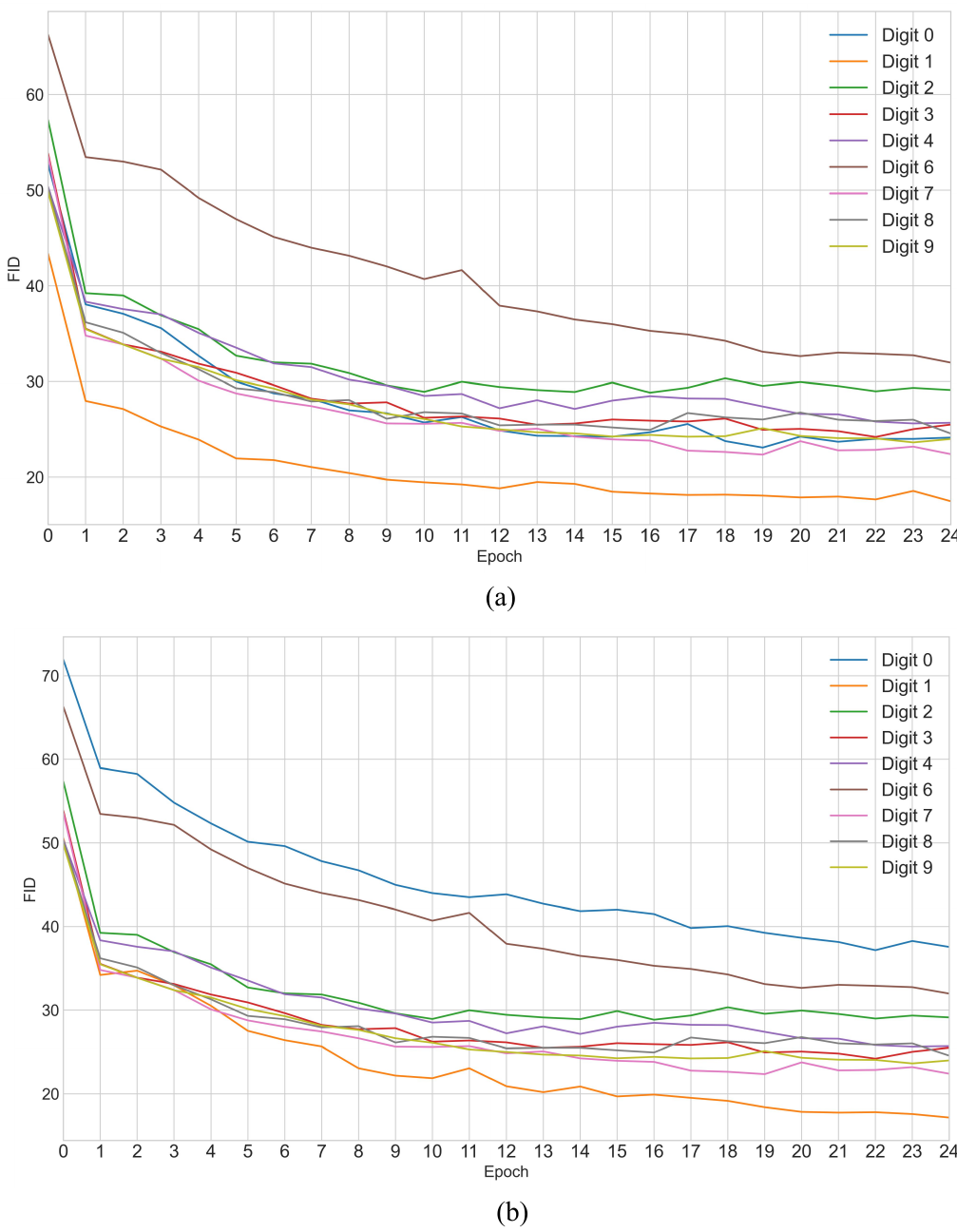}
    % \caption{Representative sample images from the datasets used in our experiments: (a) MNIST digits, (b) Uppercase Letters, and (c) Geometric Shapes.}
    \caption{FID curves over training epochs on MNIST at two resolutions: (a) $16\times16$ and (b) $28\times28$. Each line reports the per-digit FID, showing a consistent decrease and gradual convergence as training proceeds.}
    \label{crop_FID_loss}
\end{figure}
\newpage

\section{Analysis of quantum circuit layers}
\label{app_7}
In this section, we analyze the effect of quantum-generator circuit depth on generation performance. As shown in Figure~\ref{differentlayers}, circuits with 40 layers generate clearer and more diverse images than circuits with 10 or 20 layers. This improvement is consistent with the increased expressive power provided by deeper parameterized circuits. However, further increasing the depth to 60 or 80 layers reduces sample diversity, likely because deeper variational circuits are more susceptible to barren plateaus and harder to optimize effectively. Table~\ref{tab:fid_layers} reports the corresponding FID scores. The best score is achieved with 40 layers, after which FID increases.

\begin{figure}[H]
    \centering
   \includegraphics[width=1\textwidth, trim=0cm 0cm 0cm 0cm, clip]{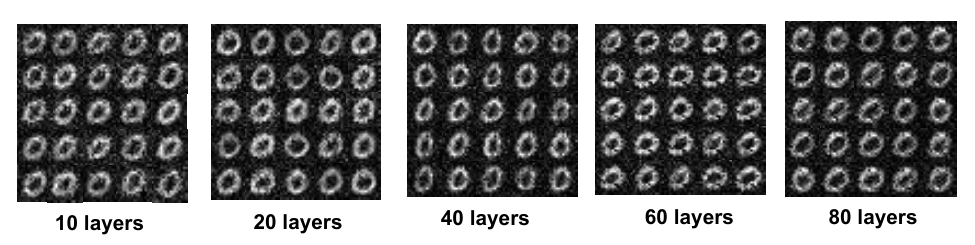}
    \caption{Qualitative comparison of BasicQGAN generated samples on the $16\times16$ MNIST setting with different generator depths (10, 20, 40, 60, and 80 layers). The 40-layer configuration is used as our default setting.}
    \label{differentlayers}
\end{figure}

\begin{table}[H]
  \caption{Comparison of BasicQGAN performance across different generator depths.}
  \label{tab:fid_layers}
  \begin{center}
    \begin{small}
      \scshape
        \begin{tabular}{lccccc}
          \toprule
          Layer & 10 & 20 & 40 (Ours) & 60 & 80 \\
          \midrule
          FID   & 25.12 & 20.71 & \textbf{20.38} & 21.74 & 26.24 \\
          \bottomrule
        \end{tabular}
    \end{small}
  \end{center}
  \vskip -0.1in
\end{table}

\section{Analysis of different learning rate}
\label{app_8}
This section analyzes how the learning rate in the 
Quantum Fidelity Landscape (QFL) optimization algorithm
affects the generation performance of BasicQGAN. To quantify how well the optimized initial-state QFL matches the true-sample QFL, we measure the discrepancy between the two distributions using the 1D Wasserstein distance. Following the sum form (i.e., omitting the $1/n$ scaling), for two sorted sets of discrete values $A=\{a_1\le\cdots\le a_n\}$ and $B=\{b_1\le\cdots\le b_n\}$, we compute
\begin{equation}
W_1(A,B)=\sum_{i=1}^{n}\lvert a_{(i)}-b_{(i)}\rvert .
\end{equation}

Figure~\ref{lr_learningrate}(a)--(d) visualizes the alignment between the QFL of the optimized initial-state ensemble and that of the true-sample ensemble under different learning rates, and reports the corresponding Wasserstein distances for each setting. A smaller Wasserstein distance indicates a closer match to the true-sample QFL. In addition, Figure~\ref{lr_learningrate}(e) shows the average-fidelity training curves during the initial-state sampling optimization.

Overall, the optimized initial-state QFLs obtained with different learning rates are all close to the true-sample QFL. The smallest Wasserstein distances are achieved with learning rates of 0.5 and 0.3 (0.0890 and 0.0882, respectively). Moreover, the average-fidelity curve is smoother under a learning rate of 0.3, indicating more stable convergence. Considering both matching quality and training stability, we choose a learning rate of 0.3.

\begin{figure}[H]
    \centering
   \includegraphics[width=0.9\textwidth, trim=0cm 0cm 0cm 0cm, clip]{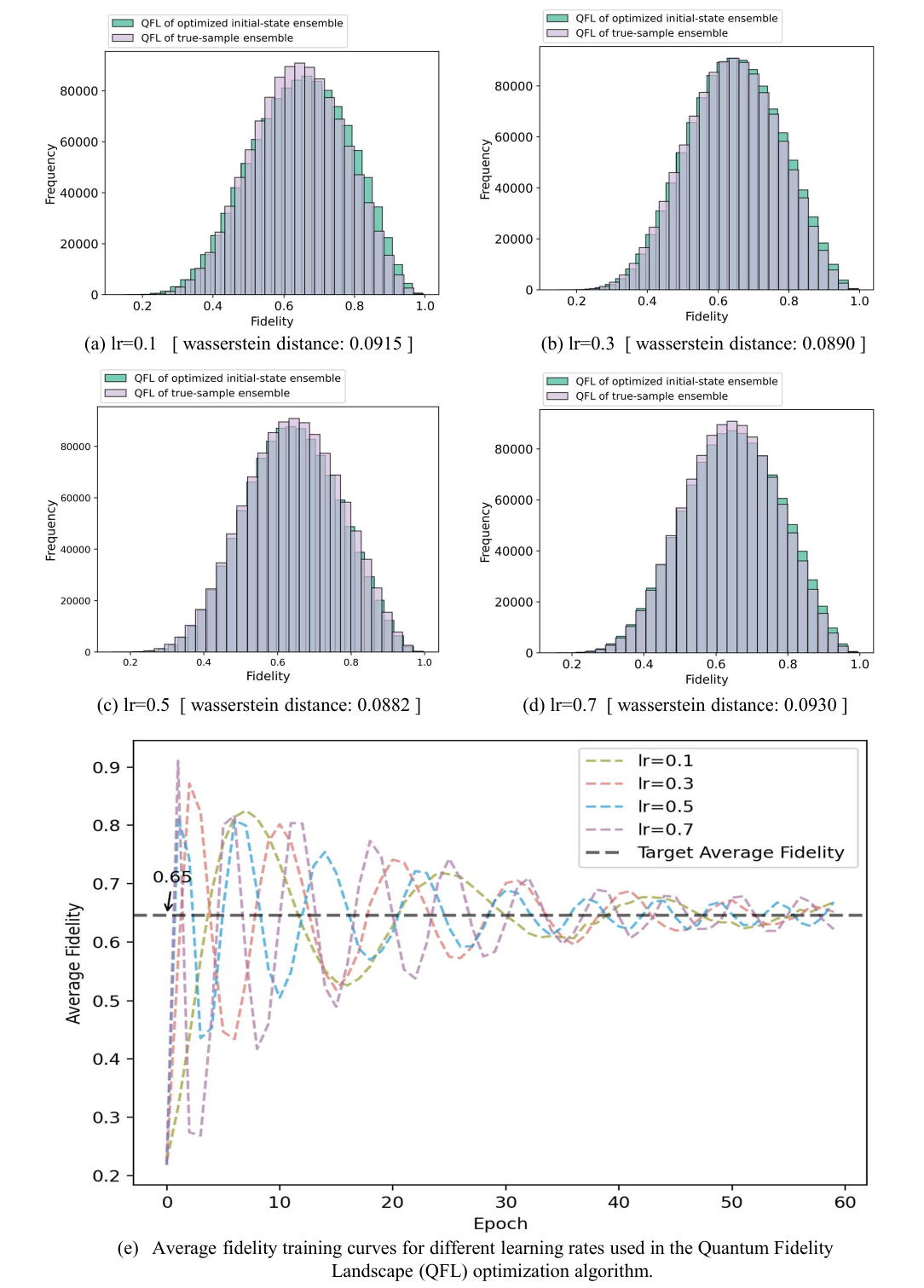}
    \caption{Effect of the different learning rate in  Quantum Fidelity Landscape (QFL) optimization algorithm. (a)--(d) histograms of QFL values for the optimized initial-state ensemble and the true-sample ensemble under different learning rates; the corresponding 1D Wasserstein distances are reported in each subplot. (e) Average-fidelity training curves for different learning rates during the optimization, where the dashed line denotes the target average fidelity.}  
    \label{lr_learningrate}
\end{figure}

\clearpage

\bibliographystyle{elsarticle-harv}
\bibliography{references}

\end{document}